\newcommand{\newref}[2][]{\hyperref[#2]{#1~\ref*{#2}}}
\renewcommand{\eqref}[1]{\hyperref[#1]{(\ref*{#1})}}
\numberwithin{equation}{section}
\newcommand{\sref}[1]{\newref[Section]{sec:#1}}
\newcommand{\tref}[1]{\newref[Theorem]{thm:#1}}
\newcommand{\lref}[1]{\newref[Lemma]{#1}}
\newcommand{\cref}[1]{\newref[Corollary]{cor:#1}}
\theoremstyle{plain}
\newtheorem{theorem}{Theorem}[section]
\newtheorem{lemma}[theorem]{Lemma}
\newtheorem{claim}[theorem]{Claim}
\newtheorem{definition}[theorem]{Definition}
\theoremstyle{definition}
\newtheorem{remark}[theorem]{Remark}
\DeclareMathOperator*{\pr}{\mathsf{Pr}} 
\DeclareMathOperator*{\ex}{\mathbb{E}}
\newcommand{\rgta}{\rightarrow}
\newcommand{\lfta}{\leftarrow}
\newcommand{\reals}{\mathbb{R}}
\newcommand{\dpm}{\{1,-1\}}
\newcommand{\dpmn}{\dpm^n}
\newcommand{\sign}{\mathsf{sign}}
\newcommand{\sgn}{\mathsf{sgn}}
\newcommand{\infl}{\mathbb{I}}
\newcommand{\AS}{\mathbb{AS}}
\newcommand{\GNS}{\mathbb{GNS}}
\newcommand{\NS}{\mathbb{NS}}
\newcommand{\eps}{\varepsilon}
\renewcommand{\epsilon}{\varepsilon}
\newcommand{\note}[1]{\marginpar{\tiny *note in TeX*}}
\newcommand{\ignore}[1]{}
\newcommand{\calN}{{\cal N}}
\renewcommand{\phi}{\varphi}
\newcommand{\R}{\mathbb{R}}
\newcommand{\N}{\mathbb{N}}
\newcommand{\NN}{{{\cal N}^n}}
\newcommand{\eqdef}{\stackrel{\textrm{def}}{=}}
\newcommand{\opt}{\mathsf{opt}}
\newcommand{\C}{{{\cal C}}}
\newcommand{\D}{{{\cal D}}}
\newcommand{\X}{{\cal X}}
\title{Bounding the Sensitivity of Polynomial Threshold Functions}
\author{Prahladh Harsha  \qquad Adam Klivans \qquad Raghu Meka\\\\
{The University of Texas at Austin }\\
 {\small {\tt \{prahladh,klivans,raghu\}@cs.utexas.edu}}}
\date{}
\begin{document}

\begin{titlepage}

\maketitle
\thispagestyle{empty}

\begin{abstract}
We give the first nontrivial upper bounds on the average sensitivity and noise sensitivity of polynomial threshold functions.  More specifically, for a Boolean function $f$ on $n$ variables equal to the sign of a real, multivariate polynomial of total degree $d$ we prove 

\begin{itemize}
\item The average sensitivity of $f$ is at most $O(n^{1-1/(4d+6)})$ (we also give a combinatorial proof of the bound $O(n^{1-1/2^d})$. 

\item The noise sensitivity of $f$ with noise rate $\delta$ is at most $O(\delta^{1/(4d+6)})$.
\end{itemize}

Previously, only bounds for the degree $d = 1$ case were known
($O(\sqrt{n}$) and $O(\sqrt{\delta})$, for average and noise sensitivity
respectively). 

We highlight some applications of our results in learning theory where
our bounds immediately yield new agnostic learning algorithms and
resolve an open problem of Klivans et al.

The proof of our results use (i) the invariance principle of Mossel et
al., (ii) the anti-concentration properties of polynomials
in Gaussian space due to Carbery and Wright and (iii) new structural
theorems about random restrictions of polynomial threshold functions obtained via
hypercontractivity. 

These structural results may be of independent
interest, as they provide a generic template for transforming problems
related to polynomial threshold functions defined on the Boolean hypercube to
polynomial threshold functions defined in Gaussian space.
\end{abstract}

\end{titlepage}
 
\section{Introduction}
\subsection{Background}
Let $P$ be a real, multivariate polynomial of degree $d$, and let $f = \sign(P)$.  We say that the Boolean function $f$ is a polynomial threshold function (PTF) of degree $d$.  PTFs play an important role in computational complexity with applications in circuit complexity \cite{AspnesBFR1994,Beigel1993}, learning theory \cite{KlivansS2004,KlivansOS2004}, communication complexity \cite{Sherstov2008,Sherstov2009}, and quantum computing \cite{BealsBCMW2001}.  While many interesting properties (e.g., Fourier spectra, influence, sensitivity) have been characterized for the case $d=1$ of linear threshold functions (LTFs), very little is known for degrees $2$ and higher.  Gotsman and Linial \cite{GotsmanL1994} conjectured, for example, that the average sensitivity of a degree $d$ polynomial is $O(d \sqrt{n})$.  In this work, we take a step towards resolving this conjecture and give the first nontrivial bounds on the average sensitivity and noise sensitivity of degree $d$ PTFs (\newref[Theorem]{thm:as}) .

Average sensitivity \cite{BenOrL1985} and noise sensitivity \cite{KahnKL1988,BenjaminiKS1999} are two fundamental quantities that arise in the analysis of Boolean functions.   Roughly speaking, the average sensitivity of a Boolean function $f$ measures the expected number of bit positions that change the sign of $f$ for a randomly chosen input, and the noise sensitivity of $f$ measures the probability over a randomly chosen input $x$ that $f$ changes sign if each bit of $x$ is flipped independently with probability $\delta$ (we give formal definitions below). 

Bounds on the average and noise sensitivity of Boolean functions have direct applications in hardness of approximation \cite{Hastad2001,KhotKMO2007}, hardness amplification \cite{ODonnell2004}, circuit complexity \cite{LinialMN1993}, the theory of social choice \cite{Kalai2005}, and quantum complexity \cite{Shi2000}.  In this paper, we focus on applications in learning theory, where it is known that bounds on the noise sensitivity of a class of Boolean functions yield learning algorithms for the class that succeed in harsh noise models (i.e., work in the agnostic model of learning) \cite{KalaiKMS2008}.  We obtain the first efficient algorithms for agnostically learning PTFs with respect to the uniform distribution on the hypercube.  We also give efficient algorithms for agnostically learning ellipsoids in $\R^n$ with respect to the Gaussian distribution, resolving an open problem of Klivans et al. \cite{KlivansOS2008}. We discuss these learning theory applications in \newref[Section]{sec:learn}.

\subsection{Main Definitions and Results}

We begin by defining the (Boolean) noise sensitivity of a Boolean function:

\begin{definition}[Boolean noise sensitivity]
Let $f$ be a Boolean function $f:\dpmn \to \dpm$.  For any $\delta \in (0,1)$, let $X$ be a random element of the hypercube $\dpmn$ and $Z$ a $\delta$-perturbation of $X$ defined as follows: for each $i$ independently, $Z_i$ is set to $X_i$ with probability $1-\delta$ and $-X_i$ with probability $\delta$. The noise sensitivity of $f$, denoted $\NS_\delta(f)$, for noise $\delta$ is then defined as follows: $$\NS_\delta(f) = \Pr\left[f(X) \neq f(Z)\right].$$
\end{definition}

Intuitively, the Boolean noise sensitivity of $f$ measures the probability that $f$ changes value when a random input to $f$ is perturbed slightly.  In order to analyze Boolean noise sensitivity, we will also need to analyze the Gaussian noise sensitivity, which is defined similarly, but the random variables $X$ and $Z$ are drawn from a multivariate Gaussian distribution. Let $\calN=\calN(0,1)$ denote the univariate Gaussian distribution on $\R$ with mean $0$ and variance $1$.

\begin{definition}[Gaussian noise sensitivity]
Let $f: \R^n \to \{-1,1\}$ be any Boolean function on $\R^n$. Let $X,Y$ be two independent random variables drawn from the multivariate Gaussian distribution $\NN$ and $Z$ a $\delta$-perturbation of $X$ defined by $Z = (1-\delta) X + \sqrt{2\delta-\delta^2} Y$. The Gaussian noise sensitivity of $f$, denoted $\GNS_\delta(f)$, for noise $\delta$ is defined as follows: $$\GNS_\delta(f) = \Pr\left[f(X) \neq f(Z)\right].$$
\end{definition}

It is well known that the Boolean and Gaussian noise sensitivity of LTFs are at most $O(\sqrt{\delta})$. Our results give the first nontrivial bounds for degrees $2$ and higher in both the Gaussian and Boolean cases, with the Gaussian case being considerably easier to handle than the Boolean case.

\begin{theorem}[Boolean noise sensitivity]\label{thm:bns}
For any degree $d$ PTF $f:\{1,-1\}^n\to\{1,-1\}$ and $0 < \delta < 1$, $$\NS_\delta(f) = 2^{O(d)}\cdot \left(\delta^{1/(4d+6)}\right).$$
\end{theorem}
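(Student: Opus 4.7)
My plan is to reduce the Boolean question to a Gaussian one, where polynomial anti-concentration gives a clean bound. The reduction goes through three ingredients advertised in the abstract: (i) a structural theorem that, via random restrictions and hypercontractivity, produces PTFs whose coordinate influences are small; (ii) the MOO invariance principle applied to each such low-influence restriction; and (iii) Carbery--Wright anti-concentration, used to handle the resulting Gaussian PTFs.

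\textbf{Step 1: Gaussian noise sensitivity of PTFs.} First I would establish a Gaussian analogue: for $P$ of degree $d$, $\GNS_\delta(\sign(P)) = 2^{O(d)}\delta^{1/(2d+c)}$ for a small constant $c$. Expanding $P$ in the Hermite basis and using that $(X,Z)$ is a $(1-\delta)$-correlated Gaussian pair gives $\ex[(P(X)-P(Z))^2] \le 2d\delta\,\|P\|_2^2$, since each Hermite coefficient $\wh P(S)$ is damped by a factor $1-(1-\delta)^{|S|}\le d\delta$. A sign change forces $|P(X)| \le |P(X)-P(Z)|$, so by Chebyshev (or Gaussian hypercontractivity for a sharper tail) and Carbery--Wright's inequality $\Pr[|P(X)| \le \tau\|P\|_2] \le O(d\,\tau^{1/d})$, optimizing the threshold $\tau$ yields the bound.

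\textbf{Step 2: Structural theorem via random restrictions.} Next I would show that for $f=\sign(P)$ of degree $d$, if one keeps each coordinate independently with some probability $\rho$ and draws uniform random values for the unkept coordinates, then with high probability the restricted PTF $f_{\bar x}$ either is nearly constant or has every coordinate influence bounded by a small parameter $\tau$. The key polynomial-level estimate: for a random restriction, the discrete derivative $\partial_i P_{\bar x}$ is a degree-$(d-1)$ polynomial whose Boolean/Gaussian tails are controlled by Bonami--Beckner hypercontractivity, so the Boolean influence $\Inf_i(f_{\bar x}) \approx \Pr[\sign(P_{\bar x})\text{ flips when }x_i\text{ is flipped}]$ can be bounded by a combination of an anti-concentration estimate on $P_{\bar x}$ and a Chebyshev-type bound on $\partial_i P_{\bar x}$. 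Markov over the random restriction then limits the fraction of ``bad'' coordinates.

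\textbf{Step 3: Invariance, averaging, and parameter optimization.} For each restriction with low influences, the invariance principle of Mossel--O'Donnell--Oleszkiewicz, applied to a smoothed version of $\sign$ at scale $\eta$, gives $\NS_\delta(f_{\bar x}) \le \GNS_\delta(f_{\bar x}) + \mathrm{poly}(\tau,\eta,d)$; Step 1 controls the Gaussian term. Then I would write
\[
\NS_\delta(f) \;\le\; O\big(\delta\cdot(1-\rho)n\big) \;+\; \ex_{\bar x}\!\big[\NS_\delta(f_{\bar x})\big],
\]
where the first term accounts for noise on the restricted coordinates and the expectation uses the structural theorem. Balancing the three parameters $\rho$ (fraction kept), $\tau$ (influence cutoff), and $\eta$ (smoothing) against $\delta$ should give the claimed exponent $1/(4d+6)$.

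\textbf{Main obstacle.} The technical heart is Step 2: converting polynomial-level hypercontractivity/anti-concentration estimates into bounds on the combinatorial influences of $\sign(P_{\bar x})$, uniformly over most restrictions. This is where the structural theorem gets its quantitative cost, and the trade-off between residual influence, invariance error, and the lost mass on restricted coordinates drives the exact exponent in the final bound, making the careful parameter accounting the delicate part of the argument.
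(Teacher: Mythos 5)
Your high-level template (random restriction to a structured family, invariance principle, Carbery--Wright anti-concentration) is the right one, but two of your steps would not work as stated. In Step~2 you propose to bound the \emph{combinatorial} influences $\Inf_i(\sign(P_{\bar x}))$ of the restricted sign function via anti-concentration on $P_{\bar x}$ plus tail bounds on $\partial_i P_{\bar x}$. That is not the regularity hypothesis the invariance principle consumes, and controlling it directly is circular: anti-concentration of $P_{\bar x}$ on the hypercube is precisely what regularity plus invariance is supposed to deliver, so you would be assuming what you want to prove. What is actually needed, and what the paper establishes, is the purely algebraic condition that the restricted \emph{polynomial} is regular in the coefficient-weight sense, $\sum_i w_i^4(P_{\bar x})\le\epsilon^2\bigl(\sum_i w_i^2(P_{\bar x})\bigr)^2$; this is controlled cleanly by applying $(2,4)$-hypercontractivity directly to the quantities $w_j^2(P_X)$, which are degree-$2d$ polynomials in the restricted variables. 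Separately, a Bernoulli restriction that keeps each coordinate with probability $\rho$ will, with constant probability, fail to fix the few very heavy-weight coordinates responsible for irregularity, so the dichotomy ``regular or nearly constant'' does not hold for that scheme. The paper instead orders coordinates by weight, introduces the $\epsilon$-critical index $K$, and fixes the top $\min(K,\,O_d(\log(1/\epsilon)/\epsilon^2))$ coordinates to uniformly random values; this deterministic choice of \emph{which} variables to restrict is what makes the ``regular restriction or $\epsilon$-determining restriction'' alternative provable.

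Step~3 also has two gaps. The structural lemma succeeds only with \emph{constant} probability $\alpha_d$, so a single application cannot close the argument; the paper recurses, applying the restriction lemma $t=\Theta(\log(1/\epsilon))$ times so that the total number of restricted coordinates is $Lt=O_d(\log^2(1/\epsilon)/\epsilon^2)$ and the probability of never hitting a good restriction drops to $\epsilon$. Your decomposition $\NS_\delta(f)\le O(\delta(1-\rho)n)+\ex_{\bar x}[\NS_\delta(f_{\bar x})]$ has a first term scaling with $n$, which cannot be charged against $\delta^{1/(4d+6)}$; the paper's analogous error term is $\epsilon + Lt\cdot\delta$, which depends only on $\epsilon$ and $d$ and is balanced by choosing $\delta=\Theta_d(\epsilon^{(4d+5)/(2d+2)}/\log^2(1/\epsilon))$. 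Finally, a small remark: the Gaussian noise-sensitivity bound of your Step~1 is proved in the paper but is not actually invoked in the Boolean argument; instead, the Boolean proof first reduces $\NS_\rho$ to a hypercube anti-concentration statement for regular polynomials, transfers that to the Gaussian setting via the invariance principle, and only then applies Carbery--Wright.
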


For the Gaussian case, we get a slightly better dependence on the degree d:
\begin{theorem}[Gaussian noise sensitivity]\label{thm:gns}
For any degree $d$ polynomial $P$ such that $P$ is either multilinear or corresponds to an ellipsoid, the following holds for the corresponding PTF $f=\sign(P)$. For all $0 < \delta < 1$, $$\GNS_\delta(f) = 2^{O(d)}\cdot \left(\delta^{1/(2d+1)}\right).$$
\end{theorem}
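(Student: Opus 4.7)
The plan is based on the observation that $\sign(P(X)) \neq \sign(P(Z))$ can happen only if either $P(X)$ is close to zero or $P(X) - P(Z)$ is unusually large. Concretely, if $|P(X)| > t$ and the two signs differ, then $|P(X) - P(Z)| > t$; so for any threshold $t > 0$,
$$\GNS_\delta(f) \leq \Pr\bigl[|P(X)| \leq t\bigr] + \Pr\bigl[|P(X) - P(Z)| > t\bigr].$$
After rescaling we may assume $\ex[P^2] = 1$ (the sign of $P$ is scale-invariant). It then remains to bound the two terms and optimize over $t$.

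For the first (``anti-concentration'') term, I would invoke the Carbery--Wright inequality for polynomials in Gaussian space, which for any degree $d$ polynomial $P$ with $\ex[P^2]=1$ yields
$$\Pr\bigl[|P(X)| \leq t\bigr] \leq 2^{O(d)} \cdot t^{1/d}.$$
This applies uniformly to both multilinear polynomials on $\gsn$ and to the quadratics that define ellipsoids, which is why the statement covers exactly these two families.

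For the second (``smoothing'') term I would expand $P$ in the Hermite basis, $P = \sum_{|\alpha|\leq d} \wh{P}(\alpha)\, H_\alpha$, and use the fact that the Ornstein--Uhlenbeck semigroup acts diagonally: with $\rho = 1-\delta$, we have $\ex[H_\alpha(X)H_\alpha(Z)] = \rho^{|\alpha|}$. Together with Parseval and the elementary inequality $1 - (1-\delta)^k \leq k\delta$ for $0 \leq k \leq d$, this gives
$$\ex\bigl[(P(X) - P(Z))^2\bigr] = 2\sum_{|\alpha|\leq d} \wh{P}(\alpha)^2 \bigl(1 - (1-\delta)^{|\alpha|}\bigr) \leq 2d\delta,$$
and Markov's inequality yields $\Pr[|P(X) - P(Z)| > t] \leq 2d\delta/t^2$.

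The last step is to balance the two terms by choosing $t$ so that $t^{1/d} \asymp d\delta/t^2$, i.e., $t = \Theta(\delta^{d/(2d+1)})$; then each term is of order $2^{O(d)} \cdot \delta^{1/(2d+1)}$, proving the theorem. The conceptual step is simply the splitting above; the main technical obstacle is importing the Carbery--Wright anti-concentration bound with the correct dependence on $d$ (so that the prefactor is only $2^{O(d)}$ rather than anything super-exponential) and verifying that a single anti-concentration statement works for both the general multilinear case and the ellipsoid quadratics, which is precisely the reason for the hypothesis in the statement.
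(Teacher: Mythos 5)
Your high-level decomposition is exactly the one in the paper: split $\GNS_\delta(f)$ into $\Pr[|P(X)|\le t]$ (bounded by Carbery--Wright) plus $\Pr[|P(X)-P(Z)|>t]$ (bounded by Markov once one controls $\|P(X)-P(Z)\|_2$), then optimize $t=\Theta(\delta^{d/(2d+1)})$. Where you diverge is in proving the key estimate $\ex[(P(X)-P(Z))^2]=O_d(\delta)$, which is the content of the paper's Claim~6.3. The paper Taylor-expands each Hermite polynomial $H_S(Z)$ around $X$, applies Cauchy--Schwarz term by term, bounds moments of $Z_i-X_i\sim\calN(0,\sqrt{2\delta})$, and finishes with $(1,2)$-hypercontractivity; the whole argument runs for over a page. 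You instead observe that $X$ and $Z$ are coordinatewise $\rho$-correlated with $\rho=1-\delta$, invoke the Ornstein--Uhlenbeck eigenrelation $\ex[H_\alpha(X)H_\beta(Z)]=\rho^{|\alpha|}\mathbf{1}_{\alpha=\beta}$, and get the exact identity $\ex[(P(X)-P(Z))^2]=2\sum_\alpha \wh{P}(\alpha)^2(1-(1-\delta)^{|\alpha|})\le 2d\delta$ in two lines. This is a clean and correct shortcut, and it is strictly simpler than the paper's route (the paper does prove the orthogonality $\ex[H_S(X)H_T(Z)]=0$ for $S\neq T$ but never uses the diagonal eigenvalue explicitly). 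One small correction to your commentary: the restriction to multilinear polynomials or ellipsoids in the theorem statement is not explained by anything in either proof. Carbery--Wright and the Hermite/OU computation both hold for arbitrary degree-$d$ polynomials, and indeed the paper's appendix proof of Claim~6.3 covers the general case; the hypothesis in the statement is a historical artifact of a bug in an earlier version that the authors kept for compatibility with the published statement, not a genuine mathematical obstruction. So your proof actually establishes the stronger result for all degree-$d$ PTFs.
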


Diakonikolas et al.~\cite{DiakonikolasRST2009} prove that a similar bound holds for all degree $d$ PTFs. Our next set of results bound the average sensitivity or total influence of degree $d$ PTFs. 

\begin{definition}[average sensitivity]
Let $f$ be a Boolean function, and let $X$ be a random element of the hypercube $\dpmn$.  Let $X^{(i)} \in \dpmn$ be such that $X^{(i)}_i = -X_i$ and $X^{(i)}_j = X_j$ for $j \neq i$. Then, the influence of the $i^{th}$ variable is defined by $$\infl_i(f) = \Pr \left[ f\left(X\right) \neq f\left(X^{(i)}\right)\right].$$ The sum of all the influences is referred to as the average sensitivity of the function $f$, $$\AS(f) = \sum_i \infl_i(f).$$
\end{definition}

Clearly, for any function $f$, $\AS(f)$ is at most $n$. It is well known that the average sensitivity of ``unate'' functions (functions monotone in each coordinate), and thus of LTFs in particular is $O(\sqrt{n})$. This bound is tight as the Majority function has average sensitivity $\Theta(\sqrt{n})$. As mentioned before, Gotsman and Linial~\cite{GotsmanL1994} conjectured in 1994 that the average sensitivity of any degree $d$ PTF $f$ is $O(d \sqrt{n})$. We are not aware of any progress on this conjecture until now, with no $o(n)$ bounds known.

We give two upper bounds on the average sensitivity of degree d PTFs. We first use a simple translation lemma for bounding average sensitivity in terms of noise sensitivity of a Boolean function and \tref{bns} to obtain the following bound.
\begin{theorem}[average sensitivity]\label{thm:as}
  For a degree $d$ PTF $f:\{1,-1\}^n\to\{1,-1\}$, $$\AS(f) = 2^{O(d)}\cdot \left(n^{1-1/(4d+6)}\right).$$
\end{theorem}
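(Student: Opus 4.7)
The plan is to deduce this bound from the noise sensitivity result \tref{bns} via a simple translation lemma that lower bounds $\NS_\delta(f)$ in terms of $\AS(f)$. Intuitively, the ``exactly one bit flip'' event already contributes enough to $\NS_\delta(f)$ to recover the average sensitivity. Concretely, I would prove the translation lemma
\[
\delta (1-\delta)^{n-1} \cdot \AS(f) \;\leq\; \NS_\delta(f)
\qquad \text{for all } \delta \in (0,1).
\]

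The lemma follows by considering the disjoint events $E_i$ that coordinate $i$ is the unique coordinate in which $Z$ differs from $X$. Each $E_i$ has probability $\delta(1-\delta)^{n-1}$, and conditional on $E_i$ the pair $(X, Z)$ is distributed as $(X, X^{(i)})$, so $\Pr[f(X) \neq f(Z) \mid E_i] = \Inf_i(f)$. Since the events $E_i$ are disjoint and all are contained in the event $\{f(X) \neq f(Z)\}$, summing over $i$ yields the claimed bound.

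Given the translation lemma, I would simply set $\delta = 1/n$, so that $\delta (1-\delta)^{n-1} \geq 1/(en)$, and conclude
\[
\AS(f) \;\leq\; en \cdot \NS_{1/n}(f).
\]
Applying \tref{bns} with $\delta = 1/n$ gives $\NS_{1/n}(f) \leq 2^{O(d)} \cdot n^{-1/(4d+6)}$, whence
\[
\AS(f) \;\leq\; en \cdot 2^{O(d)} \cdot n^{-1/(4d+6)} \;=\; 2^{O(d)} \cdot n^{1 - 1/(4d+6)},
\]
absorbing the constant $e$ into the $2^{O(d)}$ factor.

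There is essentially no hard step here: the translation lemma is a one-line counting argument, and the rest is plugging in $\delta = 1/n$. All of the real work has already been done in establishing \tref{bns}; the only thing to be mildly careful about is making sure $\delta = 1/n$ lies in the valid range of \tref{bns} and that the $(1-\delta)^{n-1}$ factor is handled correctly (it contributes a benign $1/e$ rather than anything dimension-dependent).
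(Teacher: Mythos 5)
Your proposal is correct and takes essentially the same route as the paper: the paper also proves a translation lemma (their Lemma~\ref{lm:nstoas}, $\AS(f) \leq 2ne\,\NS_{1/n}(f)$) by lower-bounding $\NS_{1/n}(f)$ via the disjoint ``exactly one bit flipped'' events, and then plugs in Theorem~\ref{thm:bns}. Your constant is marginally tighter ($e$ versus $2e$), but the argument is identical.
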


We also give an elementary combinatorial argument, to show that the average sensitivity of any degree $d$ PTF is at most $3n^{1-1/2^d}$. The combinatorial proof is based on the following lemma for general Boolean functions that may prove useful elsewhere. For $x \in \dpmn$, and $i \in [n]$, let $x_{-i} = (x_1,\ldots,x_{i-1},x_{i+1},\ldots,x_n)$.
\begin{lemma}\label{mainlmc}
 For Boolean functions $f_i:\dpmn \rgta \dpm$ with $f_i$ not depending on the $i$'th coordinate $x_i$, and $X \in_u \dpm^n$, $$\ex_X\left[\, \left| \,\sum_i X_i f_i(X_{-i})\,\right|\,\right]^2 \leq 2 \sum_i \AS(f_i) + n.$$
\end{lemma}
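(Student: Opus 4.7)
Let $S(X) \eqdef \sum_i X_i f_i(X_{-i})$. By Jensen's inequality, $\ex[|S|]^2 \leq \ex[S^2]$, so it suffices to bound the second moment by $2\sum_i \AS(f_i) + n$. For this, I would work with the Fourier expansion over $\bn$. Because $f_i$ does not depend on $x_i$, all of its Fourier coefficients supported on sets containing $i$ vanish. Multiplying by $X_i$ shifts every surviving character of $f_i$ by the singleton $\{i\}$, so $X_i f_i(X_{-i})$ is a linear combination of characters on sets that \emph{do} contain $i$. Summing over $i$, the character indexed by a set $T$ picks up the coefficient $\sum_{i \in T} \wh{f_i}(T\setminus\{i\})$, and Parseval yields
\[
\ex[S^2] \;=\; \sum_T \Big( \sum_{i \in T} \wh{f_i}(T \setminus \{i\}) \Big)^2.
\]

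Expanding the square splits this into a diagonal part ($i=j$) and off-diagonal pair contributions. The diagonal part collapses to $\sum_i \|f_i\|_2^2 = n$, matching the ``$+n$'' in the target bound. For each ordered pair $i \ne j$, parameterizing $T = U \cup \{i,j\}$ with $U \cap \{i,j\} = \emptyset$ turns the cross-sum into $\sum_U \wh{f_i}(U \cup \{j\})\,\wh{f_j}(U \cup \{i\})$; by Cauchy--Schwarz this is at most $\sqrt{\infl_j(f_i)\,\infl_i(f_j)}$, using $\infl_k(f)=\sum_{S\ni k}\wh{f}(S)^2$ together with the fact that $\wh{f_i}$ is supported on sets avoiding $i$.

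A final AM-GM step bounds $\sum_{i \ne j}\sqrt{\infl_j(f_i)\,\infl_i(f_j)}$ by $\sum_i \AS(f_i)$, where I may freely extend the sums to the diagonal since $\infl_i(f_i)=0$ by hypothesis. Putting everything together gives $\ex[S^2] \leq n + \sum_i \AS(f_i)$, which is even a bit stronger than what the lemma claims. I do not anticipate any serious obstacle; the only step requiring care is tracking exactly how the character support shifts under multiplication by $X_i$, and how this interacts with the vanishing of $\wh{f_i}(S)$ for $S\ni i$. A fully elementary alternative, avoiding Fourier formalism entirely, is to decompose each $f_i(X_{-i})$ and $f_j(X_{-j})$ into their ``even'' and ``odd'' parts in $X_j$ and $X_i$ respectively; the cross-expectation $\ex[X_i X_j f_i(X_{-i}) f_j(X_{-j})]$ then reduces to an inner product of the two odd parts, which are $\pm 1$-valued indicators of sensitivity of $f_i$ at coordinate $j$ (and vice versa), leading to the same estimate.
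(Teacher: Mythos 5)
Your proof is correct, and it takes a genuinely different route from the paper's. After the common Cauchy--Schwarz reduction to bounding $\ex[S^2]$, the paper argues combinatorially: for each ordered pair $i\neq j$ it writes $\ex[X_iX_jf_i(X_{-i})f_j(X_{-j})]$ as a sum over $x$, observes that the contribution from $x$ outside the ``sensitivity sets'' $S_i^j\cup S_j^i$ cancels (the indicator of this union depends only on $x_{-ij}$, and the inner $\ex[x_ix_j]$ then vanishes), and bounds the remaining contribution crudely by $\mu(S_i^j)+\mu(S_j^i)=\infl_j(f_i)+\infl_i(f_j)$, yielding the stated constant $2$. You instead work entirely in the Fourier basis, compute $\ex[S^2]=\sum_T\bigl(\sum_{i\in T}\wh{f_i}(T\setminus\{i\})\bigr)^2$ by Parseval, split off the diagonal $n=\sum_i\|f_i\|_2^2$, and bound the off-diagonal pair sum $\sum_U \wh{f_i}(U\cup\{j\})\,\wh{f_j}(U\cup\{i\})$ by Cauchy--Schwarz in Fourier space to get $\sqrt{\infl_j(f_i)\,\infl_i(f_j)}$ before applying AM--GM. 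This is not only valid but strictly sharper: you obtain $\ex[S^2]\le n+\sum_i\AS(f_i)$, improving the paper's constant $2$ to $1$ on the $\sum_i\AS(f_i)$ term. The gain comes from the fact that Cauchy--Schwarz (or, in the elementary even/odd-part formulation you sketch at the end, the observation that the cross-expectation is supported on $S_i^j\cap S_j^i$ rather than on the union) exploits cancellation that the paper's crude $\mu(S_i^j\cup S_j^i)$ bound discards. Both bounds remain tight up to constants on the example the paper gives, so the improvement is cosmetic for the downstream $O(n^{1-2^{-d}})$ theorem, but your argument is the cleaner one.
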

We believe that when the functions $f_i$ in the above lemma are LTFs, the above bound can be improved to $O(n)$, which in turn would imply the Gotsman-Linal conjecture for quadratic threshold functions.
\subsection{Random Restritctions of PTFs -- a structural result} \label{sec:struct}
An important ingredient of our sensitivity bounds for PTFs are new structural theorems about random restrictions of PTFs obtained via {\em hypercontractivity}. The structural results we obtain can be seen as part of the high level ``randomness vs structure'' paradigm that has played a fundamental role in many recent breakthroughs in additive number theory and combinatorics. Specifically, we obtain the following structural result (Lemmas~\ref{case1} and \ref{case2}):  for any PTF, there exists a small set of variables such that with at least a constant probability, any random restriction of these variables satisfies one of the following: (1) the restricted polynomial is ``regular'' in the sense that no single variable has large influence or (2) the sign of the restricted polynomial is a very biased function.

We remark that our structural results, though motivated by similar results of Servedio~\cite{Servedio2007} and Diakonikolas et al.~\cite{DiakonikolasGJSV2009} for the simpler case of LTFs, do not follow from a generalization of their arguments for LTFs to PTFs. The structural results for random restrictions of low-degree PTFs provide a reasonably generic template for reducing problems involving arbitrary PTFs to ones on regular PTFs. In fact, these structural properties are used precisely for the above reason both in this work and in a parallel work by one of the authors, Meka and Zuckerman~\cite{MekaZ2009} to construct pseudorandom generators for PTFs.

\subsection{Related Work}
Independent of this work, Diakonikolas, Raghavendra, Servedio, and Tan \cite{DiakonikolasRST2009} have obtained nearly identical results to ours for both the average and noise sensitivity of PTFs.  The broad outline of their proof is also similar to ours.  In our proof, we first obtain bounds on noise sensitivity and then move to average sensitivity using a translation lemma. On the other hand, Diakonikolas et al.~\cite{DiakonikolasRST2009} first obtain bounds on the average sensitivity of PTFs and then use a generalization of Peres' argument~\cite{Peres2004}  for LTFs to move from average sensitivity to noise sensitivity. 
%Their observation yields a converse to our translation lemma for the specific case of PTFs.

Regarding our structural result described in \sref{struct}, Diakonikolas, Servedio, Tan and Wan \cite{DiakonikolasSTW2009} have independently obtained similar results to ours. As an application, they prove the existence of low-weight approximators for polynomial threshold functions.  
\subsection{Proof Outline}
The proofs of our theorems are inspired by the use of the {\sl invariance principle} in the proof of the ``Majority is Stablest'' theorem~\cite{MosselOO2005}. As in the proof of the ``Majority is Stablest'' theorem, our main technical tools are the invariance principle and the anti-concentration bounds (also called small ball probabilities) of Carbery and Wright~\cite{CarberyW2001}. 

Bounding the probability that a threshold function changes value either when it is perturbed slightly (in the case of noise sensitivity) or when a variable is flipped (average sensitivity) involves bounding probabilities of the form $\Pr\left[|Q(X)| \leq |R(X)|\right]$ where $Q(X), R(X)$ are low degree polynomials and $R$ has small $l_2$-norm relative to that of $Q$. The event $|Q(X)| \leq |R(X)|$ implies that either $|Q(X)|$ is small or $|R(X)|$ is large. In other words, for every $\gamma$ $$\Pr\left[|Q(X)| \leq |R(X)|\right] \leq \Pr\left[ |Q(X)| \leq \gamma\right] + \Pr\left[|R(X)| > \gamma\right].$$ Since $R$ has small norm, the second quantity in the above expression can be easily bounded using a tail bound (even Markov's inequality suffices). Bounding the first quantity is trickier. Our first observation is that if the random variable $X$ were distributed according to the Gaussian distribution as opposed to the uniform distribution on the hypercube, bounds on probabilities of the form $\Pr\left[|Q(X)| \leq \gamma\right]$ immediately follow from the anti-concentration bounds of Carbery and Wright~\cite{CarberyW2001}. We then transfer these bounds to the Boolean setting using the invariance principle.

Unfortunately, the invariance principle holds only for regular polynomials (i.e., polynomials in which no single variable has large influence). We thus obtain the required bounds on noise sensitivity and average sensitivity for the special case of regular PTFs. We then extend these results to an arbitrary PTF $f$ using our structural results on random restrictions of the PTF $f$. The structural results state that either the restricted PTF is a regular polynomial or is a very biased function. In the former case, we resort to the above argument for regular PTFs and bound the noise sensitivity of the given PTF. In the latter case, we merely note that the noise sensitivity of a biased function can be easily bounded. This in turn lets us extend the results for regular PTFs to all PTFs.

\ignore{We then extend these results to an arbitrary PTF $f$ by randomly restricting some small number of $f$'s input variables; with high probability this restricted version of $f$ will be the sign of a ``regular'' polynomial.  This transformation is motivated by the results of Servedio~\cite{Servedio2007} and Diakonikolas et al.~\cite{DiakonikolasGJSV2009} who faced similar issues for LTFs.  

More specifically we obtain the following structural result:  for any PTF, there exists a small set of variables such that with at least a constant probability, any random restriction of these variables satisfies one of the following: (1) the restricted polynomial is a regular polynomial or (2) the sign of the restricted polynomial is a very biased function. In the former case, we resort to the above argument for regular PTFs and bound the sensitivity of the given PTF. In the latter case, we merely note that the noise sensitivity (and average sensitivity) of a biased function can be easily bounded. This in turn lets us extend the results for regular PTFs to all PTFs.
}

 \ignore{They have given a bound on the average sensitivity of degree $d$ PTFs of $O(n^{1-1/4d})$, and the broad outline of their proof is very similar to ours (they also use random restrictions, polynomial anti-concentration, and invariance).  Additionally, they have a simple, Fourier-based proof of the bound $O(n^{1-1/2^d})$ (we also give a simple, though combinatorial, proof of this bound).  
Finally, we note that our final bounds for Boolean noise sensitivity and average sensitivity have very similar exponents; this is not a coincidence: Diakonikolas et al. \cite{DiakonikolasRST2009} observed that Peres' argument~\cite{Peres2004} bounding the noise sensitivity of LTFs nicely generalizes to PTFs.  They prove that a bound on the (Boolean) average sensitivity of $n^{1-\gamma}$ for degree $d$ PTFs immediately implies a bound of $\delta^{\gamma}$ on the (Boolean) noise sensitivity of PTFs for noise $\delta$.  This translation does not hold, however, in the Gaussian setting.}

\section{Learning Theory Applications} \label{sec:learn}
In this section, we briefly elaborate on the learning theory applications of our results. 
%\subsection{Agnostic Learning}
Our bounds on Boolean and Gaussian noise sensitivity imply learning results in the challenging {\em agnostic} model of learning of Haussler~\cite{Haussler1992}  and Kearns, Schapire and Sellie~\cite{KearnsSS1994} which we define below. 

\begin{definition} \label{def:ag}
Let $\D$ be an arbitrary distribution on $\X$ and $\C$ a class of
Boolean functions $f: \X \to \{-1,1\}$.  For $\delta,\epsilon \in (0,1)$, we say that algorithm $A$
is a $(\delta,\epsilon)$-agnostic learning algorithm for ${\cal C}$ with respect to
$\D$ if the following holds. For any distribution $\D'$ on $\X\times \{-1,1\}$ whose marginal over $\X$ is $\D$, if
$A$ is given access to a set of labeled examples $(x,y)$ drawn from
$\D'$, then with probability at least $1 - \delta$ algorithm $A$
outputs a hypothesis $h: \X \to \{-1,1\}$ such that $$\Pr_{(x,y) \sim \D'}[h(x) \neq y] \leq \opt + \eps$$ where $\opt$ is the error made by the best classifier in $\C$, that is, $\opt = \inf_{g \in \C}\Pr_{(x,y)\sim \D'}[g(x) \neq y]$.
\end{definition}
\ignore{
In 2005, Kalai, Klivans, Mansour and Servedio showed the following sufficient condition for agnostically learning a concept class:

\begin{theorem}[{\cite{KalaiKMS2008}}] \label{thm:ag} Let $\C$ be a class of functions mapping
  $X$ to $\{-1,1\}$.  Let $D$ be a distribution on $\X \times \{-1,1\}$
  with marginal distribution $D_{X}$. Assume that for each $c \in \C$,
  there exists a polynomial $P$ of degree $d$ such that $\ex_{\D_{X}}\left[(P(x)
  - c(x))^2\right] \leq \epsilon^2$.  Then $\C$ is agnostically learnable to
  accuracy $\eps$ in time poly$(n^d/\epsilon)$.
\end{theorem}

Finally, the following theorem due to Klivans, O'Donnell and Servedio~\cite{KlivansOS2004} gives a precise
relationship between polynomial approximation and noise
sensitivity.
\begin{theorem}[{\cite{KlivansOS2004}}]
Let $f: \{1,-1\}^n \to \{1,-1\}$ with $\NS_{\delta}(f) \leq m(\delta)$ for
some (invertible) function $m:(0,1)\to(0,1)$.  Then there exists a constant $C$ and polynomial $P$ of degree at most $C \cdot
(1/m^{-1}(\delta))$ such that 

\[ \ex_{x \in_u \{1,-1\}^n} \left[\left(P(x) - f(x)\right)^2\right] \leq \delta. \]
\end{theorem}

We also have a similar theorem for Gaussian noise sensitivity (Klivans, O'Donnell and Servedio~\cite{KlivansOS2008}).  That is, we may replace $\NS_{\delta}(f)$ above with $\GNS_{\delta}(f)$ and the uniform distribution over
$\{-1,1\}^n$ with the multivariate Gaussian distribution $\NN$.
}
Kalai, Klivans, Mansour and Servedio~\cite{KalaiKMS2008} showed that the existence of low-degree real valued polynomial $l_2$-approximators to a class of functions, implies agnostic learning algorithms for the class. In an earlier result, Klivans, O'Donnell and Servedio~\cite{KlivansOS2004} gave a precise relationship between polynomial approximation and noise sensitivity, essentially showing that small noise sensitivity bounds imply good low-degree polynomial $l_2$-approximators. 

Combining these two results, it follows that bounding the noise sensitivity (either Boolean or Gaussian) of a concept class $\C$ yields an agnostic learning algorithm for $\C$ (with respect to the appropriate distribution). Thus, using our bounds on noise sensitivity of PTFs, we obtain corresponding learning algorithms for PTFs with respect to the uniform distribution over the hypercube.

\begin{theorem} \label{thm:mainlearn}
The concept class of degree $d$ PTFs is agnostically learnable to within $\epsilon$ with respect to the uniform distribution on $\{-1,1\}^n$ in time $n^{1/\epsilon^{O(d)}}$.
\end{theorem}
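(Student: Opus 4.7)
The plan is to derive \tref{mainlearn} as an immediate composition of \tref{bns} with two standard reductions from the agnostic learning literature, exactly as outlined in the preceding discussion. The first tool I will invoke is the $L_1$ polynomial regression algorithm of Kalai, Klivans, Mansour and Servedio~\cite{KalaiKMS2008}: if every concept $c \in \C$ is approximated in squared $\ell_2$-norm to within $\eps^2$ by some polynomial of degree $D$ under a distribution $\D$, then $\C$ is agnostically learnable under $\D$ in time $\poly(n^D, 1/\eps)$. The second tool is the theorem of Klivans, O'Donnell and Servedio~\cite{KlivansOS2004}, which converts noise sensitivity bounds into such polynomial approximators: if $\NS_\eta(f) \leq m(\eta)$ for an invertible function $m$, then for every $\alpha \in (0,1)$ there is a polynomial $P$ of degree $O(1/m^{-1}(\alpha))$ satisfying $\ex_{X \in_u \bn}[(P(X) - f(X))^2] \leq \alpha$.

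First I would plug the Boolean noise sensitivity bound $m(\eta) = 2^{O(d)} \cdot \eta^{1/(4d+6)}$ from \tref{bns} into the KOS reduction with target $\alpha = \eps^2$. A short calculation gives $m^{-1}(\eps^2) = (\eps^2 / 2^{O(d)})^{4d+6}$, so the degree of the resulting approximator is $D = O(1/m^{-1}(\eps^2)) = 1/\eps^{O(d)}$ once the $2^{O(d)}$ prefactor and the $4d+6$ exponent are folded into the final $O(d)$ dependence. Hence every degree-$d$ PTF admits a polynomial $\ell_2$-approximator of degree $1/\eps^{O(d)}$ with squared error at most $\eps^2$. Next I would feed this $D$ into the KKMS regression algorithm with $\D$ equal to the uniform distribution on $\bn$; this directly produces, in time $n^D \cdot \poly(1/\eps) = n^{1/\eps^{O(d)}}$, a hypothesis whose error on the agnostic distribution is at most $\opt + \eps$, matching the conclusion of \tref{mainlearn}.

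I do not anticipate any genuine obstacle at this step: the substantive mathematical work was invested in proving \tref{bns}, and the present theorem is a direct pipeline of two previously known reductions applied to that bound. The only care required is routine bookkeeping when simplifying $(2^{O(d)}/\eps^2)^{4d+6}$ into the stated $1/\eps^{O(d)}$ form, which introduces no new ideas. If anything deserves closer attention, it is verifying that the KOS theorem is stated in the exact form needed (Boolean noise sensitivity, uniform distribution on $\bn$, squared-$\ell_2$ approximation) so that its output can be fed directly into the KKMS algorithm without further translation; but this match is immediate from the relevant statements.
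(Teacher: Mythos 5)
Your proposal is correct and follows exactly the route the paper intends: apply the Klivans--O'Donnell--Servedio noise-sensitivity-to-$\ell_2$-approximation theorem to the Boolean noise sensitivity bound of \tref{bns} to get a degree-$1/\eps^{O(d)}$ approximator, and then feed that into the Kalai--Klivans--Mansour--Servedio polynomial regression algorithm. This is precisely the two-step pipeline the paper describes in \sref{learn}, so there is nothing to distinguish between the two arguments.
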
 
\ignore{
\begin{theorem} \label{thm:mainlearn}
Let $C$ be the concept class of degree $d$ PTFs.   Then $C$ is agnostically learnable to within $\epsilon$ with respect to an arbitrary Gaussian on $\R^n$ or the uniform distribution on $\{-1,1\}^n$ in time $n^{1/\epsilon^{O(d)}}$.
\end{theorem}}
These are the first polynomial-time algorithms for agnostically learning constant degree PTFs with respect to the uniform distribution on the hypercube (to within any constant error parameter).  Previously, Klivans et al. \cite{KlivansOS2008} had shown that quadratic (degree $2$) PTFs corresponding to spheres are agnostically learnable with respect to spherical Gaussians on $\R^n$.  Our bounds on the Gaussian noise sensitivity of ellipsoids imply that this result can be extended to all ellipsoids with respect to (not necessarily spherical) Gaussian distributions thus resolving an open problem of Klivans et al.~\cite{KlivansOS2008}. 

It is implicit from a recent paper of Blais, O'Donnell and Wimmer~\cite{BlaisOW2008} that bounding the Boolean noise sensitivity for a concept class $\C$ yields non-trivial learning algorithms for a very broad class of discrete and continuous product distributions.  We believe this is additional motivation for obtaining bounds on a function's Boolean noise sensitivity.

\section{Organization}
The rest of the paper is organized as follows. We introduce the necessary notation and preliminaries in \newref[Section]{sec:notation}. We then present the structural results on random restrictions of PTFs (Lemmas~\ref{case1} and \ref{case2}) in \newref[Section]{sec:randrest}. In \sref{gns} we present our analysis of Gaussian noise sensitivity, followed by the analysis of Boolean noise sensitivity in \sref{bns}.  We remark that the analysis of the Gaussian noise sensitivity is simpler than the Boolean noise sensitivity analysis, since the Boolean case, in some sense, reduces to the ``regular'' or Gaussian case. We then present our bounds on average sensitivity of PTFs in \sref{as}.

\section{Notation and Preliminaries}\label{sec:notation}
We will consider functions/polynomials over $n$ variables $X_1,\dots,X_n$. Corresponding to any set $I \subseteq [n]$ (possibly multi-set), there is a monomial $X^{I}$ defined as $X^I = \prod_{i\in I}X_i$. The degree of the monomial $X^I$ is the size of the set $I$, denoted by $|I|$. Note that if $I$ is a ``regular'' set (opposed to a multi-set), then the monomial $X^I$ is linear in each of the participating variables $X_i, i\in I$. 

A polynomial of degree $d$ is a linear combination of monomials of degree at most $d$, that is, $P(X_1,\ldots,X_n) = \sum_{I\subseteq [n],|I| \leq d} a_I X^I$. The $a_I$'s are called the coefficients of the polynomial $P$. By convention, we set $a_I =0$ for all other $I$. If the above summation is only over sets $I$ and not multi-sets, then the polynomial is said to be {\em multilinear}. Observe that while working over the hypercube, it suffices to consider only multilinear polynomials. We use the following notations throughout.
\begin{enumerate}
\item Unless otherwise stated, we work with a PTF $f$ of degree $d$ and a degree $d$ polynomial $P(X) = \sum_I a_IX^I$ with zero constant term (i.e., $a_\emptyset = 0$) such that  $f(X_1,\ldots,X_n) = \sign(P(X_1,\ldots,X_n)-\theta)$. In case of ambiguity, we will refer to the coefficients $a_I$ as $a_I(P)$. 
\item For a polynomial $P$ as above and an underlying distribution over $X=(X_1,\dots,X_n)$, the $l_2$-norm of the polynomial over $X$ is defined by $\|P\|^2 = \ex\left[P(X)^2\right]$. Note that if $P$ is a multilinear polynomial and the distribution is either the multivariate Gaussian $\NN$ or the uniform distribution over the hypercube, then $\|P\|^2 = \sum_I a_I^2$.
\item  For $i \in [n]$, $x^i = (x_1,\ldots,x_i) \in \dpm^i$, $f_{x^i}: \dpm^{n-i} \rgta \dpm$ is defined by $f_{x^i}(X_{i+1},\ldots,X_n) = \sign(P(x_1,\ldots,x_i,X_{i+1},\ldots,X_n) - \theta)$.
\item For $i \in [n]$, $P_{|i}(X_1,\ldots,X_i) = \sum_{I \subseteq [i]} a_I X^I$ is the {\em restriction} of $P$ to the variables $X_1,\ldots,X_i$.
\item For a multi-set $S$, $x \in_u S$ denotes an uniformly chosen element from $S$.
\item For clarity, we supress the exact dependence of the constants on the degree $d$ in this extended abstract; a more careful examination of our proofs shows that all constants depending on the degree $d$ are at worst $2^{O(d)}$.
\end{enumerate}

\begin{definition}
A partial assignment $x^i = (x_1,\ldots,x_i)$ is $\epsilon$-{\em determining} for $f$, if there exists $b \in \{1,-1\}$ such that $\pr_{(X_{i+1},\ldots,X_n)\in_u \dpm^{n-i}}[\, f_{x^i}(X_{i+1},\ldots,X_n) \neq b\,] \leq \epsilon$.
\end{definition}

We now define {\em regular polynomials} which play an important role in all our results. Intuitively, a polynomial is regular if no variable has high influence. For a polynomial $Q$, the {\em weight} of the $i^{th}$ coordinate is defined by $w^2_i(Q) = \sum_{I \ni i} a_I^2$. For $i \in [n]$, let $\sigma_i(Q)^2 = \sum_{j \geq i} w_j^2(Q)$. 
\begin{definition}[regular polynomials]
A multilinear polynomial $Q$ is {\em $\epsilon$-regular} if $\sum_i w_i^4(P)\leq \epsilon^2 \left(\sum_i w^2_i(P)\right)^2  = \epsilon^2\sigma^4_{1}(P)$. A PTF $f(x) = \sign(Q(x)-\theta)$ is $\epsilon$-regular if $Q$ is $\epsilon$-regular.
\end{definition}
We also assume without loss of generality that the variables are ordered such that $w_1(P) \geq w_2(P) \geq \cdots \geq w_n(P)$. 

We repeatedly use three powerful tools: $(2,4)$-hypercontractivity, the invariance principle of Mossel et al.~\cite{MosselOO2005} and the anti-concentration bounds of Carbery and Wright~\cite{CarberyW2001}. We state the relevant results below.

\begin{lemma}[$(2,4)$-hypercontractivity]\label{hypercon}
If $Q,R$ are degree $d$ multilinear polynomials, then for $X \in_u \dpm^n$, $\ex_X\,[Q^2\cdot R^2]\,] \leq 9^d \cdot \ex_X[Q^2]\cdot \ex_X[R^2]$. In particular, $\ex[Q^4] \leq 9^d \cdot \ex[Q^2]^2$.
\end{lemma}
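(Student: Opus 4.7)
The plan is to derive the general product-form inequality from the special case $Q=R$ by Cauchy–Schwarz, and to obtain that special case from the standard Bonami–Beckner hypercontractive inequality. Concretely, first I would apply Cauchy–Schwarz to get
$$\ex_X[Q^2 R^2] \;\leq\; \sqrt{\ex_X[Q^4]\cdot \ex_X[R^4]},$$
which reduces matters to proving the ``in particular'' statement $\ex[f^4] \leq 9^d \ex[f^2]^2$ separately for $f=Q$ and $f=R$.

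For the single-polynomial fourth-moment bound I would invoke Bonami's inequality: if $T_\rho$ denotes the noise operator, acting on the Fourier side by $\widehat{T_\rho g}(S) = \rho^{|S|} \hat g(S)$, then $\|T_\rho g\|_4 \leq \|g\|_2$ for $\rho = 1/\sqrt{3}$. Since $Q$ is multilinear of degree $\leq d$, it lies in the image of $T_{1/\sqrt{3}}$: writing $g = \sum_{|S|\le d} 3^{|S|/2}\hat Q(S)\chi_S$ gives $Q = T_{1/\sqrt 3} g$. Applying Bonami to $g$ then yields
$$\|Q\|_4 \;=\; \|T_{1/\sqrt 3}g\|_4 \;\leq\; \|g\|_2 \;=\; \Bigl(\sum_{|S|\le d} 3^{|S|}\,\hat Q(S)^2\Bigr)^{1/2} \;\leq\; 3^{d/2}\|Q\|_2,$$
which is exactly $\ex[Q^4]\leq 9^d\ex[Q^2]^2$ after raising to the fourth power. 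Doing the same for $R$ and plugging back into the Cauchy–Schwarz step gives
$$\ex_X[Q^2 R^2] \;\leq\; \sqrt{9^d\ex[Q^2]^2\cdot 9^d\ex[R^2]^2} \;=\; 9^d\,\ex[Q^2]\,\ex[R^2],$$
as claimed.

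The main technical obstacle is Bonami's base inequality itself, which I would take as a black box (Bonami 1970, Beckner 1975). Its proof goes by induction on $n$: the base case $n=1$ is a direct computation verifying $\ex_{X\in_u\dpm}[(a + (b/\sqrt 3)X)^4] = a^4 + 2a^2b^2 + b^4/9 \leq (a^2+b^2)^2$ for all $a,b\in\reals$, and the inductive step tensorizes along the last coordinate using Minkowski's integral inequality applied to the split $f(X)=f_0(X')+X_n f_1(X')$. Since the paper will need only the stated corollary, I would cite Bonami–Beckner rather than reproduce the classical induction.
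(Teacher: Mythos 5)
Your proposal is correct, and it gives the standard derivation: Cauchy--Schwarz reduces the product bound $\ex[Q^2R^2]\le\sqrt{\ex[Q^4]\,\ex[R^4]}$ to the single-polynomial fourth-moment bound, and that in turn follows from Bonami--Beckner $(2,4)$-hypercontractivity by writing $Q=T_{1/\sqrt 3}\,g$ and noting $\|g\|_2\le 3^{d/2}\|Q\|_2$ for degree-$d$ multilinear $Q$. The paper itself states \lref{hypercon} without proof, citing it as a well-known consequence of hypercontractivity, so there is no in-paper argument to compare against; your route is the canonical one the authors implicitly rely on, and the numerics ($3^{d/2}$ per factor, hence $9^d$ overall after squaring and multiplying) check out.
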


The following anti-concentration bound  is a special case of Theorem~8 of \cite{CarberyW2001} (in their notation, set $q = 2d$ and the log-concave distribution $\mu$ to be $\NN$).
\begin{theorem}[Carbery-Wright anti-concentration bound]\label{cw}
There exists an absolute constant $C$ such that for any polynomial $Q$ of degree at most $d$ with $\|Q\| = 1$ and any interval $I \subseteq \reals$ of length $\alpha$, $\pr_{X\leftarrow \NN}[ Q(X )\in I] \leq C d \,\alpha^{1/d}$.
\end{theorem}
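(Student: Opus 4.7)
The plan is a two-stage reduction, concentrating the real difficulty in a sharp one-dimensional anti-concentration bound. First, by translating we replace the interval $I$ of length $\alpha$ by $[-\alpha/2, \alpha/2]$ and $Q$ by $Q - c$ (where $c$ is the midpoint of $I$); this preserves the degree, and up to absorbing a constant factor into $C$ we may also preserve $\|Q\|=1$. Thus it suffices to show that $\pr_{X\leftarrow\NN}[\,|Q(X)|\leq \alpha/2\,]\leq Cd\,\alpha^{1/d}$. Applying Markov's inequality to $|Q|^{-1/d}$,
$$\pr\bigl[\,|Q(X)| \leq \alpha/2\,\bigr] \;=\; \pr\bigl[\,|Q(X)|^{-1/d} \geq (\alpha/2)^{-1/d}\,\bigr] \;\leq\; (\alpha/2)^{1/d}\cdot\ex\bigl[\,|Q(X)|^{-1/d}\,\bigr],$$
so the task becomes to bound the negative moment $\ex[|Q(X)|^{-1/d}]\leq Cd$.

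Next I would reduce from $n$ dimensions to one using rotational invariance of $\NN$. Decompose $X = X_\parallel v + X_\perp$ for a direction $v$ on the sphere, with $X_\parallel\leftarrow\calN(0,1)$ and $X_\perp$ its independent orthogonal component. Conditional on $X_\perp$, the map $X_\parallel\mapsto Q(X)$ is a univariate polynomial of degree at most $d$. Choosing $v$ to align with the top-degree homogeneous part of $Q$ (or by averaging over a uniformly random spherical $v$), one can ensure that on typical $X_\perp$ the resulting univariate polynomial has $L^2(\gamma)$-norm comparable to $\|Q\|_2$. By Fubini the multivariate negative moment then reduces to a uniform univariate bound: for every degree-$d$ polynomial $P(t)$ with $\ex_{t\leftarrow\calN(0,1)}[P(t)^2]=c^2$, we would need $\ex_t[|P(t)|^{-1/d}] \leq C/c^{1/d}$.

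For the univariate step, factor $P(t) = a\prod_{i=1}^d(t-r_i)$, so that $|P(t)|^{-1/d} = |a|^{-1/d}\prod_i |t-r_i|^{-1/d}$; after accounting for the roots, the normalization pins $|a|$ to a constant multiple of $c$. The principal obstacle here is that the exponent $1/d$ is \emph{exactly} critical: the individual negative moments $\ex_t[|t-r|^{-\beta}]$ blow up as $\beta\uparrow 1$, so any Hölder-type decoupling of the product into separate factor expectations loses at least a logarithmic factor in the exponent. To recover the sharp exponent $1/d$ one has to argue geometrically rather than by product inequalities: the set $\{|P|\leq \epsilon\}$ is a union of at most $d$ intervals on each of which $P$ is monotone with a derivative controlled by the other factors, and one bounds the Gaussian measure of each component directly by exploiting log-concavity of the Gaussian density. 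This log-concavity/endpoint step is the real technical difficulty, and it is the heart of the original Carbery--Wright argument (which in fact handles arbitrary log-concave measures by the same route).
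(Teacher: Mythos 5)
The paper does not prove this statement; it is quoted as a special case of Theorem~8 of Carbery and Wright, so there is no ``paper's proof'' to compare against. Your sketch, as a self-contained attempt, has a concrete gap at its very first substantive step: the Markov reduction to the negative moment $\ex\left[|Q(X)|^{-1/d}\right]$ cannot be carried out, because that quantity is in general \emph{infinite}. Take $Q(x) = c\,x_1^d$ with $c$ chosen so that $\|Q\|=1$; then $\ex\bigl[|Q(X)|^{-1/d}\bigr] = c^{-1/d}\,\ex\bigl[|X_1|^{-1}\bigr] = +\infty$, since a standard Gaussian has a divergent $(-1)$-moment. (The same happens for any $Q$ with a genuine order-$d$ zero near a point of positive density.) Thus the inequality $\Pr[|Q|\le\alpha/2]\le(\alpha/2)^{1/d}\ex[|Q|^{-1/d}]$ gives only the vacuous bound $\le\infty$, and nothing downstream can repair it.

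You do notice, in the univariate discussion, that the exponent $1/d$ is exactly critical and that per-root negative moments blow up as $\beta\uparrow 1$. But you treat this as an obstacle to H\"older-decoupling the \emph{product} in the one-dimensional step, rather than recognizing that it already invalidates the reduction you started with. Carbery and Wright's Theorem~8 does not go through a negative moment of $p$: its statement involves a \emph{positive} moment $\|p\|_{L^{q/d}}$ multiplying the measure of the sublevel set, and the proof bounds that measure directly via an iterative argument over log-concave measures rather than via Markov on a (possibly divergent) negative power. Your closing remark---that the real work is a direct log-concavity estimate on the sublevel-set measure---is the right intuition, but to be a valid proof roadmap you would have to discard the Markov reduction entirely and instead set up the direct sublevel-set estimate from the outset, matching the structure of the Carbery--Wright argument.
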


\newcommand{\psif}{\|\psi^{(4)}\|_\infty}

The following result due to Mossel et al.~\cite{MosselOO2005} generalizes the classical quantitative central limit theorem for sums of independent variables, Berry-Ess\'een Theorem, to low-degree polynomials over independent variables.
\begin{theorem}[Mossel et al.]\label{invariance}
There exists a universal constant $C$ such that the following holds. For any $\epsilon$-regular multilinear polynomial $P$ of degree at most $d$ with $\|P\| = 1$ and $ t \in \reals$, 
\[ \left|\,\pr_{X \in_u \{1,-1\}^n}[ P(X) < t] - \pr_{Y \lfta \NN}[P(Y) < t]\,\right| \leq C^d \eps^{2/(4d+1)}.\]
%\[ \left|\ex_{X\in_u \{1,-1\}^n}\left[\psi(P(X))\right] - \ex_{Y \gets \NN}\left[\psi(P(Y))\right]\right| \leq d\, 9^d\,\psif\,\left(\sum_i w_i^4(P)\right). \]
\end{theorem}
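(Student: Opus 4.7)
The plan is to follow a Lindeberg-style replacement strategy combined with a smoothing argument---essentially the approach used in the MOO paper itself. There are two ingredients: a hybrid argument that swaps Bernoulli coordinates for Gaussian ones one at a time (with hypercontractivity controlling each swap), and a smoothing step that converts the desired CDF estimate into a smooth-test-function bound plus an anti-concentration penalty.

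First I would fix a parameter $\alpha > 0$ (to be optimized at the end) and pick a smooth bump $\psi = \psi_{t,\alpha} : \reals \to [0,1]$ with $\psi(s) = 1$ for $s \le t$, $\psi(s) = 0$ for $s \ge t + \alpha$, and $\psif = O(\alpha^{-4})$. Then
\[
\pr_X[P(X) < t] \;\le\; \ex_X[\psi(P(X))] \;\le\; \ex_Y[\psi(P(Y))] + E \;\le\; \pr_Y[P(Y) < t + \alpha] + E,
\]
where $E := |\ex_X[\psi(P(X))] - \ex_Y[\psi(P(Y))]|$ is the ``smooth hybrid error''. The Carbery--Wright bound (Theorem~\ref{cw}) applied to $Y \sim \NN$ on the interval $[t, t+\alpha]$ gives $\pr_Y[P(Y) \in [t, t+\alpha]] \le Cd\,\alpha^{1/d}$, so in one direction the CDF gap is at most $Cd\alpha^{1/d} + E$; a symmetric argument using $\psi$ shifted by $-\alpha$ gives the matching inequality in the opposite direction.

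Next I would bound $E$ via the Lindeberg hybrid $X = Z^{(0)} \to Z^{(1)} \to \cdots \to Z^{(n)} = Y$, where $Z^{(i)}$ has its first $i$ coordinates Gaussian and the rest Bernoulli. In step $i$ only $Z_i$ changes, from $X_i$ to $Y_i$. Since $P$ is multilinear, $P(Z) = A_i(Z_{-i}) + Z_i \cdot B_i(Z_{-i})$ with $A_i, B_i$ independent of the swapped coordinate. Taylor-expand $\psi$ to fourth order about $A_i$: the order-$0,1,2,3$ terms match in expectation because $X_i \in_u \dpm$ and $Y_i \sim \mathcal{N}(0,1)$ agree in their first three moments ($0,1,0$). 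Only the fourth-order remainder survives, bounded by $O(\psif) \cdot \ex[B_i^4] \cdot \ex[|X_i|^4 + |Y_i|^4]$. Using $(2,4)$-hypercontractivity (\lref{hypercon}) on the degree-$(d{-}1)$ polynomial $B_i$ (which applies uniformly across the mixed Gaussian/Bernoulli measures appearing on $Z_{-i}$, at the cost of an absolute $C^d$ constant), we get $\ex[B_i^4] \le 9^d\, w_i^4(P)$, where $w_i(P)$ is the influence weight defined in the paper. Summing over $i$ and using $\eps$-regularity of $P$ together with $\|P\| = 1$ (so that $\sum_i w_i^4(P) \le \eps^2 \sigma_1^4(P) = \eps^2$) yields $E \le C_1^d\, \alpha^{-4}\, \eps^2$.

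Finally I would combine the two bounds: the CDF gap is at most $Cd\,\alpha^{1/d} + C_1^d\, \alpha^{-4}\, \eps^2$. Setting $\alpha^{1/d} \asymp \alpha^{-4} \eps^2$ gives the optimum $\alpha \asymp \eps^{2d/(4d+1)}$, and the total error becomes $\le C^d \eps^{2/(4d+1)}$, matching the statement. The main technical obstacle is the bookkeeping around the Taylor step: correctly identifying $\|B_i\|_2^2 = w_i^2(P)$ so that the regularity hypothesis enters \emph{quadratically} in $\eps$ (yielding exponent $2/(4d+1)$ rather than $1/(4d+1)$), verifying that hypercontractivity applies uniformly along the hybrid path, and choosing a $C^4$ bump with $\psif = O(\alpha^{-4})$---all other $d$-dependent constants are routine and are absorbed into $C^d$.
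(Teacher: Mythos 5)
The paper does not prove this statement: it cites Mossel et al.~\cite{MosselOO2005} and appends a one-line remark that the MOO proof extends from their notion of regularity ($\max_i w_i^2(P)\leq\tau$) to the one used here ($\sum_i w_i^4(P)\leq\eps^2\sigma_1^4(P)$). Your proposal is a correct reconstruction of that underlying MOO argument --- Lindeberg one-coordinate replacement, controlled by $(2,4)$-hypercontractivity and a $C^4$ smooth indicator, then converted to a CDF bound via Carbery--Wright anti-concentration --- and it correctly isolates where the modified regularity notion enters: after summing the hybrid errors you land on $\sum_i \ex[B_i^4]\lesssim 9^d\sum_i w_i^4(P)$, and you plug in the hypothesis $\sum_i w_i^4(P)\leq\eps^2\sigma_1^4(P)$ directly rather than passing through $\max_i w_i^2(P)$. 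This is precisely the ``extends straightforwardly'' the paper's remark alludes to, and your optimization $\alpha\asymp\eps^{2d/(4d+1)}$ yields the stated exponent $2/(4d+1)$.

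Two small points worth tightening, neither fatal. First, you write $\sum_i w_i^4(P)\leq\eps^2\sigma_1^4(P)=\eps^2$; but $\sigma_1^2(P)=\sum_I|I|a_I^2$ can be as large as $d\|P\|^2=d$, so $\sigma_1^4(P)\leq d^2$ rather than $=1$. This only contributes a factor $d^2$, which is absorbed into $C^d$, but the equality as written is false. Second, the paper's Lemma~\ref{hypercon} is stated only for $X\in_u\dpm^n$, whereas your Lindeberg step needs $(2,4)$-hypercontractivity for $B_i$ under the mixed Gaussian/Rademacher product measure at each stage of the hybrid; you flag this parenthetically, and it is standard (hypercontractivity tensorizes and holds with at worst the same constant for Gaussian coordinates), but as written you are invoking a strictly stronger statement than what the paper provides. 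With those caveats, the derivation is sound and matches the cited source.
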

The result stated in \cite{MosselOO2005} uses $\max_i w^2_i(P)$ as the notion of regularity instead of $\sum_i w_i^4(P)$ as we do. However, their proof extends straightforwardly to the above.

\section{Random Restrictions of PTFs}\label{sec:randrest}
We now establish our structural results on random restrictions of low-degree PTFs. The use of {\em critical indices} ($K(P,\epsilon)$) in our analysis is motivated by the results of Servedio \cite{Servedio2007} and Diakonikolas et al.~\cite{DiakonikolasGJSV2009} who obtain similar results for LTFs. At a high level, we show the following.

Given any $\epsilon>0$, define the $\epsilon$-critical index of a multilinear polynomial $P$,  $K = K(P,\epsilon)$, to be the least index $i$ such that $w^2_j(P) \leq \epsilon^2\, \sigma^2_{i+1}(P)$ for all $j > i$. We consider two cases depending on how large $K(P,\epsilon)$ is and roughly, show the following (here $c, \alpha > 0$ are some universal constants).
\begin{enumerate}
\item $K \leq 1/\epsilon^{cd}$. In this case we show that for $x^K = (x_1,\ldots,x_K) \in_u \dpm^K$, the PTF $f_{x^K}$ is $\epsilon$-regular with probability at least $\alpha$. %We then use the results of Carbery and Wright and the invariance principle of Mossel et al.~to bound the noise-sensitivity of $\epsilon$-regular PTFs.
\item $K > 1/\epsilon^{cd}$. In this case we show that with probability at least $\alpha$, the value of the threshold function is determined by the {\em top} $L = 1/\epsilon^{cd}$ variables. %Specifically, we show that $x^L = (x_1,\ldots,x_L) \in_u \dpm^L$, with probability at least $\alpha$, the PTF $f_{x^L}$ is an almost-constant function.
\end{enumerate}

More concretely, we show the following. 
\begin{lemma}\label{case1}
For every integer $d$, there exist constants $a_d\in \reals$, $\gamma_d > 0$ such that for any multilinear polynomial $P$ of degree at most $d$ and $K=K(P,\epsilon)$ as defined above, the following holds.
The polynomial $P_{x^K}(Y_{k+1},\ldots,Y_n) \eqdef P(x_1,\ldots,x_K,Y_{K+1},\ldots,Y_n)$ in variables $Y_{K+1},\ldots,Y_n$ obtained by randomly choosing $x^K=(x_1,\ldots,x_K) \in_u \dpm^K$ is $a_d \epsilon$-regular with probability at least $\gamma_d$.
\end{lemma}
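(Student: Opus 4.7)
My plan is to show, with constant probability depending on $d$, that two quantities of the restricted polynomial $P_{x^K}$ are well-controlled: its total squared weight $\sigma_1^2(P_{x^K}) = \sum_{j > K} w_j^2(P_{x^K})$ is at least half its mean, while its weight-fourth-moment $\sum_{j > K} w_j^4(P_{x^K})$ is suitably small; regularity $a_d\epsilon$ then follows from the ratio. Expanding $P_{x^K}(Y) = \sum_{I \subseteq \{K+1,\ldots,n\}} \tilde{a}_I\, Y^I$, each coefficient $\tilde{a}_I = \sum_{J \subseteq [K]} a_{I \cup J}\, x^J$ is itself a multilinear polynomial of degree at most $d - |I| \leq d$ in the random $x^K \in_u \{\pm 1\}^K$, and both $w_j^2(P_{x^K}) = \sum_{I \ni j} \tilde{a}_I^2$ and $\sigma_1^2(P_{x^K}) = \sum_I |I|\, \tilde{a}_I^2$ are random polynomials of degree $\leq 2d$ in $x^K$.

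Orthogonality of parities gives $\ex[\tilde{a}_I^2] = \sum_J a_{I \cup J}^2$, so $\ex[w_j^2(P_{x^K})] = w_j^2(P)$ and $\ex[\sigma_1^2(P_{x^K})] = \sigma_{K+1}^2(P)$. The defining property of the critical index, $w_j^2(P) \leq \epsilon^2 \sigma_{K+1}^2(P)$ for all $j > K$, immediately yields $\sum_{j > K} w_j^4(P) \leq \epsilon^2 \sigma_{K+1}^4(P)$. For the second moments of the relevant random sums, the $(2,4)$-hypercontractivity of \lref{hypercon} combined with Cauchy-Schwarz applied pairwise gives $\ex[\tilde{a}_I^2\, \tilde{a}_{I'}^2] \leq 9^d\, \ex[\tilde{a}_I^2]\, \ex[\tilde{a}_{I'}^2]$ for any $I, I'$. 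Expanding $w_j^4(P_{x^K})$ and $\sigma_1^4(P_{x^K})$ as double sums over pairs $(I, I')$ and applying this termwise then yields $\ex[\sum_j w_j^4(P_{x^K})] \leq 9^d \sum_{j > K} w_j^4(P) \leq 9^d \epsilon^2 \sigma_{K+1}^4(P)$ and, analogously, $\ex[\sigma_1^4(P_{x^K})] \leq 9^d \sigma_{K+1}^4(P) = 9^d \ex[\sigma_1^2(P_{x^K})]^2$.

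These bounds permit the finale via Paley-Zygmund plus Markov. From the second-moment control on the non-negative $\sigma_1^2(P_{x^K})$, Paley-Zygmund gives $\Pr[\sigma_1^2(P_{x^K}) \geq \sigma_{K+1}^2(P)/2] \geq 1/(4 \cdot 9^d)$, and Markov at threshold $t = 8 \cdot 9^d$ gives $\sum_j w_j^4(P_{x^K}) \leq 8 \cdot 9^{2d} \epsilon^2 \sigma_{K+1}^4(P)$ except with probability $1/(8 \cdot 9^d)$. A union bound leaves both events holding with probability $\gamma_d := 1/(8 \cdot 9^d)$; in that case $\sum_j w_j^4(P_{x^K}) \leq 32 \cdot 9^{2d} \epsilon^2 \sigma_1^4(P_{x^K})$, so $P_{x^K}$ is $a_d\epsilon$-regular with $a_d = O(3^d)$. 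The main obstacle is the inherent exponential-in-$d$ balancing: the Paley-Zygmund lower bound on the mean-containment event is only $1/\poly(9^d)$, which forces the Markov threshold (and hence $a_d$) to be exponential in $d$ as well; this is unavoidable using only second moments, but the lemma's statement permits constants depending on $d$, so the resulting $a_d = 2^{O(d)}$ and $\gamma_d = 1/2^{O(d)}$ are acceptable.
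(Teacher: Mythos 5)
Your proposal is correct and follows essentially the same route as the paper: compute the mean of the restricted weights, bound a fourth-moment quantity via $(2,4)$-hypercontractivity and Markov, lower bound $\sigma^2_1(P_{x^K})$ with constant probability by a Paley--Zygmund-type argument, and conclude by a union bound. The only cosmetic difference is that the paper packages the lower-bound step as \lref{polylb1} (itself a Paley--Zygmund variant via \lref{problm} applied to the centered degree-$\le 2d$ polynomial $\sum_{j>K}w_j^2(P_X)-\sigma^2_{K+1}(P)$), whereas you invoke Paley--Zygmund directly on the second moment of the nonnegative variable $\sigma_1^2(P_{x^K})$; both yield $\gamma_d, a_d = 2^{O(d)}$.
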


\begin{lemma}\label{case2}
For every $d$, there exist constants $b_d,c_d \in \reals$, $\delta_d > 0$, such that for any multilinear polynomial $P$ of degree at most $d$ the following holds. If $K(P,\epsilon) \geq c_d\log(1/\epsilon)/\epsilon^2 = L$, then a random partial assignment $(x_1,\ldots,x_L) \in_u \dpm^L$ is $b_d \epsilon$-determining for $P$ with probability at least $\delta_d$.
%For any $\theta \in \reals$,
%\[ \pr_{x \in_u \dpm^n}[ \sgn(P(x) - \theta) \neq \sgn(P_{|L}(x_1,\ldots,x_L) - \theta)] \leq \delta_d\]
%for a universal constant $\delta_d < 1$.
\end{lemma}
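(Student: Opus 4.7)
The plan is to exploit the fact that when $K\geq L=c_d\log(1/\epsilon)/\epsilon^2$, the non-increasing arrangement $w_1\geq w_2\geq\cdots$ forces the ``tail'' weight $\sigma_{L+1}^2$ (and hence the $L^2$-norm of the portion of $P$ involving variables $X_{L+1},\ldots,X_n$) to decay geometrically. Consequently, a random restriction $x^L$ typically produces a head value $P_{|L}(x^L)-\theta$ of magnitude $\Omega_d(\|P\|)$ against which the residual random polynomial is negligible, pinning down the sign.

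\emph{Setup and reduction.} Write $P=P_{|L}+Q$ where $Q$ collects all monomials touching $\{X_{L+1},\ldots,X_n\}$; these two parts are $L^2$-orthogonal under $X\in_u\{-1,1\}^n$. Put $A(x^L):=P_{|L}(x^L)-\theta$ and $B(x^L,Y):=Q(x^L,Y_{L+1},\ldots,Y_n)$; then $f_{x^L}(Y)=\sign(A(x^L)+B(x^L,Y))$, which equals $\sign(A(x^L))$ whenever $|B(x^L,Y)|<|A(x^L)|$. Hence it suffices to exhibit a $\delta_d$-fraction of $x^L\in\{-1,1\}^L$ for which $\Pr_Y[|B(x^L,Y)|\geq|A(x^L)|]\leq b_d\epsilon$.

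\emph{Tail decay from the critical index.} The non-increasing weights and the definition of $K$ imply, for each $i<K$, $w_{i+1}^2>\epsilon^2\sigma_{i+1}^2$, hence $\sigma_{i+2}^2<(1-\epsilon^2)\sigma_{i+1}^2$. Iterating $L\leq K$ times,
\[
\sigma_{L+1}^2 \leq (1-\epsilon^2)^L\sigma_1^2 \leq e^{-\epsilon^2 L}\sigma_1^2 = \epsilon^{c_d}\sigma_1^2 \leq d\,\epsilon^{c_d}\|P\|^2.
\]
Since $\|Q\|^2\leq\sigma_{L+1}^2$ and $\|P\|^2=\|P_{|L}\|^2+\|Q\|^2$, this gives $\|Q\|^2\leq d\epsilon^{c_d}\|P\|^2$ and, for $\epsilon$ small enough, $\|P_{|L}\|^2\geq\|P\|^2/2$.

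\emph{Paley--Zygmund plus Markov, then finish.} By the $(2,4)$-hypercontractivity of \lref{hypercon}, extended to the inhomogeneous multilinear polynomial $A$ via Minkowski on its homogeneous parts, $\ex_{x^L}[A^4]\leq 2^{O(d)}\ex_{x^L}[A^2]^2$. Paley--Zygmund then gives $\Pr_{x^L}[A(x^L)^2\geq\ex[A^2]/2]\geq 2^{-O(d)}=:2\gamma$. Let $T(x^L):=\ex_Y[B(x^L,Y)^2]$; since $\ex_{x^L}[T]=\|Q\|^2$, Markov yields $\Pr_{x^L}[T(x^L)\geq\|Q\|^2/\gamma]\leq\gamma$. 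A union bound produces, with probability at least $\gamma=:\delta_d$, both $A(x^L)^2\geq\ex[A^2]/2\geq\|P_{|L}\|^2/2\geq\|P\|^2/4$ (using $\ex[A^2]=\|P_{|L}\|^2+\theta^2$ because $P_{|L}$ has zero mean) and $T(x^L)\leq\|Q\|^2/\gamma\leq 2^{O(d)}\epsilon^{c_d}\|P\|^2$. On this event, Markov in $Y$ gives
\[
\Pr_Y\bigl[|B(x^L,Y)|\geq|A(x^L)|\bigr] \leq \frac{T(x^L)}{A(x^L)^2} \leq 2^{O(d)}\epsilon^{c_d}.
\]
Choosing $c_d=1$ with $b_d=2^{O(d)}$ (or $c_d$ larger to absorb constants) completes the proof. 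The main technical point is arranging simultaneously a ``large head'' (via Paley--Zygmund, which is where hypercontractivity is crucial) and a ``small conditional tail'' (via Markov), with the Markov threshold $\gamma^{-1}=2^{O(d)}$ tuned so that the intersection of the two events retains probability $\Omega_d(1)$.
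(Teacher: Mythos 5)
Your proposal is correct and follows essentially the same route as the paper's proof: decompose $P = P_{|L} + Q$, bound $\|Q\|$ using the geometric decay of $\sigma^2_i$ implied by the critical-index hypothesis, lower-bound $|P_{|L}(x^L)-\theta|$ with constant probability via hypercontractivity/Paley--Zygmund, control the conditional size of $Q$ via Markov, and intersect the two events. The one place you deviate in detail is how you bound $\|Q\|$: you shortcut through $\|Q\|^2 \le \sigma^2_{L+1} \le (1-\epsilon^2)^L \sigma^2_1 \le d(1-\epsilon^2)^L\|P\|^2$ using $\sigma^2_1 = \sum_I |I| a_I^2 \le d\|P\|^2$, which is somewhat cleaner than the paper's Claim~\ref{case2cl} (the paper compares $\sigma^2_1$ to $\sum_{\emptyset\neq I\subseteq[L]}a_I^2$ by a multi-step chain); this also yields the handy corollary $\|P_{|L}\|^2 \ge \|P\|^2/2$ directly. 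You also apply Paley--Zygmund to $A^2$ in place of the paper's one-sided Lemma~\ref{polylb1} applied to $P_{|L}-\theta$, and you condition on $T(x^L)=\ex_Y[B^2]$ rather than on the conditional tail probability; both are routine reformulations. A tiny nit: Lemma~\ref{hypercon} already applies to any multilinear polynomial of degree at most $d$, so the ``extension via Minkowski on homogeneous parts'' you mention is unnecessary (though harmless).
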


To prove the above structural properties we need the following simple lemmas. 
\begin{lemma}[{\cite[Lemma~3.2]{AlonGK2004}}]\label{problm}
Let $A$ be a real valued random variable satisfying $\ex[A] = 0$, $\ex[A^2] = \sigma^2$ and $\ex[A^4] \leq b \sigma^4$. Then, $\pr[\,A \geq \sigma/4\sqrt{b}\,] \geq 1/4^{4/3}b$.
\end{lemma}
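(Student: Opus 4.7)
The plan is to run a one-sided Paley--Zygmund-type argument, using the fourth-moment bound twice: once in a ``reverse'' Hölder interpolation to lower bound $\ex[|A|]$, and once in an upper bound on the tail contribution to $\ex[A_+]$, where $A_+ = \max(A,0)$.

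First I would lower bound $\ex[|A|]$. The natural move is to interpolate $\ex[A^2]$ between $\ex[|A|]$ and $\ex[A^4]$ via Hölder's inequality with exponents $3/2$ and $3$:
\[
\ex[A^2] \;=\; \ex\!\left[|A|^{2/3}\cdot |A|^{4/3}\right] \;\leq\; \bigl(\ex[|A|]\bigr)^{2/3}\bigl(\ex[A^4]\bigr)^{1/3}.
\]
Rearranging and plugging in $\ex[A^2]=\sigma^2$, $\ex[A^4]\leq b\sigma^4$ gives $\ex[|A|] \geq \sigma/\sqrt{b}$. Since $\ex[A]=0$, we have $\ex[A_+] = \ex[A_-] = \ex[|A|]/2 \geq \sigma/(2\sqrt{b})$.

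Next I would set $t = 1/(4\sqrt{b})$ and split
\[
\ex[A_+] \;=\; \ex\!\left[A\cdot \ind{0<A<t\sigma}\right] + \ex\!\left[A\cdot \ind{A\geq t\sigma}\right].
\]
The first term is at most $t\sigma = \sigma/(4\sqrt{b})$ by definition of the event. For the second term, apply Hölder's inequality with exponents $4$ and $4/3$:
\[
\ex\!\left[A\cdot \ind{A\geq t\sigma}\right] \;\leq\; \bigl(\ex[A^4]\bigr)^{1/4}\bigl(\Pr[A\geq t\sigma]\bigr)^{3/4} \;\leq\; b^{1/4}\sigma\cdot \bigl(\Pr[A\geq t\sigma]\bigr)^{3/4}.
\]
Combining the two bounds with $\ex[A_+]\geq \sigma/(2\sqrt b)$ yields
\[
\frac{\sigma}{2\sqrt b} - \frac{\sigma}{4\sqrt b} \;\leq\; b^{1/4}\sigma\cdot \bigl(\Pr[A\geq t\sigma]\bigr)^{3/4},
\]
and solving gives $\Pr[A\geq t\sigma] \geq 1/(4^{4/3}b)$, as desired.

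There is no real obstacle here; the only point that might look like one is recognizing that a straight two-sided Paley--Zygmund applied to $A^2$ only controls $|A|$, so one must exploit $\ex[A]=0$ to convert mass on $\{|A|\geq t\sigma\}$ into mass on the positive tail alone. The reverse Hölder interpolation in step one is the cleanest way to do so, since it turns the fourth-moment hypothesis into a lower bound on $\ex[|A|]$ (and hence on $\ex[A_+]$) with the right constants to make the arithmetic balance out to exactly $1/(4^{4/3}b)$.
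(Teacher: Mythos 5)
Your proof is correct, and the arithmetic checks out exactly: the Hölder interpolation with exponents $3/2$ and $3$ gives $\ex[|A|]\geq \sigma/\sqrt b$, hence $\ex[A_+]\geq \sigma/(2\sqrt b)$ using $\ex[A]=0$; the split at $t\sigma=\sigma/(4\sqrt b)$ together with Hölder with exponents $4$ and $4/3$ on the tail term then produces precisely $\pr[A\geq \sigma/(4\sqrt b)]\geq 1/(4^{4/3}b)$. Note, however, that the paper itself does not prove this lemma; it is imported verbatim as Lemma~3.2 of Alon, Gutin and Krivelevich~\cite{AlonGK2004}, so there is no in-paper argument to compare against. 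That said, the very specific constants $\sigma/(4\sqrt b)$ and $4^{4/3}$ in the statement essentially force the decomposition you chose (a threshold at $\sigma/(4\sqrt b)$, Hölder with the $(4,4/3)$ pair, and the $(3/2,3)$ reverse-Hölder to get $\ex[|A|]\geq\sigma/\sqrt b$), so your argument is in all likelihood the same as the cited one. The one-sided refinement you emphasize --- converting a lower bound on $\ex[|A|]$ into one on $\ex[A_+]$ via $\ex[A]=0$ so the conclusion is about the positive tail rather than $|A|$ --- is indeed the key point and you handle it correctly.
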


\begin{lemma}\label{polylb1}
For $d > 0$ there exist constants $\alpha_d, \beta_d > 0$ such that for any degree at most $d$ polynomial $Q$, and $X \in_u \dpm^n$, $\pr[\,Q(X) \geq  \ex[Q] + \alpha_d \sigma(Q)\,] \geq\beta_d$,
where $\sigma^2(Q)$ is the variance of $Q(X) = \|Q\|^2-\left(\ex_X[Q]\right)^2$. In particular, $\pr[\,Q(X) \geq  \ex[Q] \,] \geq\beta_d$.
\end{lemma}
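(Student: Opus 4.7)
The plan is to center the polynomial and then invoke the one-sided Paley--Zygmund style estimate of \lref{problm} after controlling the fourth moment via $(2,4)$-hypercontractivity.

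First I would set $A = Q(X) - \ex[Q]$, which is again a polynomial of degree at most $d$ in the Boolean variables $X_1,\dots,X_n$ (subtracting a constant does not change the degree). By construction, $\ex[A] = 0$ and $\ex[A^2] = \sigma^2(Q)$. Next, I would apply \lref{hypercon} to $A$ to obtain
\[
\ex[A^4] \;\leq\; 9^d \cdot \ex[A^2]^2 \;=\; 9^d \cdot \sigma(Q)^4,
\]
so that $A$ satisfies the hypothesis of \lref{problm} with $b = 9^d$ and $\sigma = \sigma(Q)$.

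Feeding these parameters into \lref{problm} yields
\[
\pr\!\left[\,A \;\geq\; \frac{\sigma(Q)}{4\cdot 3^{d}}\,\right] \;\geq\; \frac{1}{4^{4/3}\cdot 9^d},
\]
which, upon re-expressing in terms of $Q(X) = A + \ex[Q]$, is exactly the desired inequality with $\alpha_d = 1/(4\cdot 3^d)$ and $\beta_d = 1/(4^{4/3}\cdot 9^d)$; both are $2^{-O(d)}$, matching the paper's general convention on degree dependencies.

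The ``in particular'' assertion is then immediate: if $\sigma(Q) > 0$ then $\{Q(X) \geq \ex[Q] + \alpha_d\sigma(Q)\} \subseteq \{Q(X) \geq \ex[Q]\}$, so the stated lower bound $\beta_d$ on the former transfers to the latter; and if $\sigma(Q) = 0$, then $Q$ is constant almost surely and $\pr[Q(X) \geq \ex[Q]] = 1$. I do not anticipate a genuine obstacle here---the only subtlety is remembering to apply hypercontractivity to the \emph{centered} polynomial $A$ (rather than to $Q$ itself, whose fourth moment would contain an unwanted $\ex[Q]^4$ term), so that the fourth moment bound is in terms of the variance $\sigma^2(Q)$ as required by \lref{problm}.
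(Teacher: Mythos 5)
Your proof is correct and is essentially identical to the paper's: both center $Q$ to $A = Q - \ex[Q]$, bound $\ex[A^4]$ by $9^d\sigma^4(Q)$ via $(2,4)$-hypercontractivity, and invoke \lref{problm}. Your additional care in making $\alpha_d,\beta_d$ explicit and treating the degenerate case $\sigma(Q)=0$ is harmless bookkeeping, not a different argument.
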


\begin{proof}
Let random variable $A = Q(X)-\ex_X[Q(X)]$. Then, $\ex[A] = 0$, $\ex[A^2] = \sigma^2(Q)$ and by $(2,4)$-hypercontractivity, $\ex[A^4] \leq 9^d\, \ex[A^2] = 9^d \sigma^4(Q)$. The claim now follows from \newref[Lemma]{problm}.
\end{proof}

\subsection{Proof of {\newref[Lemma]{case1}}}
\begin{proof}
Let $X \equiv (X_1,\ldots,X_K)$. We prove the lemma as follows: (1) Bound the expectation of $\sum_{j > K} w^4_j(P_X)$ using hypercontractivity and use Markov's inequality to show that with high probability $\sum_{j > K} w_j^4(P_X)$ is small. (2) Use the fact that $\sigma_{K+1}^2(P_X)= \sum_{j > K} w_j^2(P_X)$ is a degree at most $2d$ polynomial in $X$ and \newref[Lemma]{polylb1} to lower bound the probability that $\sigma_{K+1}^2(P_X)$ is large. Let 
\begin{multline*}
   P_X(Y_{K+1},\ldots,Y_n) = P(X_1,\ldots,X_K,Y_{K+1},\ldots,Y_n) = \\R(X_1,\ldots,X_K) + \sum_{J \subseteq [K+1,n], 0 < |J| \leq d} Q_J(X_1,\ldots,X_K)\, \prod_{j \in J} Y_j.
\end{multline*}

We now bound $\ex[\sum_{j > K} w_j^4(P_X)]$. Fix a $j  > K$ and observe that $w_j^2(P_X) = \sum_{J \ni j} Q_J^2(X)$. Thus, 
\begin{equation}\label{eq0}
\ex_X\left[w_j^2(P_X)\right] = \sum_{J \ni j} \ex_X\left[Q_J^2(X)\right] = \sum_{J \ni j} \|Q_J\|^2 = w^2_j(P).  
\end{equation}
 Further, by $(2,4)$-hypercontractivity, \newref[Lemma]{hypercon},
\begin{multline*}\label{}
  \ex_X\left[w_j^4(P_X)\right] = \ex_X\left[\,\sum_{J_1,J_2 \ni j}\, Q_{J_1}^2(X) \,Q_{J_2}^2(X)\,\right]  = \sum_{J_1,J_2 \ni j}\, \ex_X\left[\,Q_{J_1}^2(X) \,Q_{J_2}^2(X)\,\right]\\
\leq  \sum_{J_1,J_2 \ni j} 9^d \ex_X\left[ Q_{J_1}^2(X)\right]\cdot \ex_X\left[ Q_{J_2}^2(X)\right] = \sum_{J_1,J_2 \ni j}\, 9^d \,\|Q_{J_1}\|^2\,\|Q_{J_2}\|^2 = 9^d \, w^4_j(P).
\end{multline*}
 Hence, $\ex[\, \sum_{j > K} w_j^4(P_X)\,] \leq 9^d \sum_{j>K} w^4_j(P)$. Now, from the definition of $K(P,\epsilon)$, $w^2_j(P) \leq \epsilon^2 \sigma^2_{K+1}(P)$ for all $j > K$. Thus, $$\sum_{j > K} w^4_j(P) \leq \epsilon^2 \sigma^2_{K+1}(P) \sum_{j > K} w^2_j(P) = \epsilon^2 \sigma^4_{K+1}(P).$$
Combining the above inequalities and applying Markov's inequality we get
\begin{equation}\label{eq1}
  \pr_X\,[\, \sum_{j > K} w_j^4(P_X) \geq \gamma 9^d \epsilon^2 \sigma^4_{K+1}(P)\,] \leq 1/\gamma .
\end{equation}
Observe that $Q(X) = \sum_{j > K} w_j^2(P_X)$ is a degree at most $2d$ polynomial in $X_1,\ldots,X_k$ and by \eqref{eq0}, $\ex\left[Q\right] = \sum_{j> K} w^2_j(P) = \sigma^2_{K+1}(P)$. Thus, by applying \newref[Lemma]{polylb1} to $Q$, $\pr\,[\,\sum_{j > K} w_j^2(P_X) \geq \sigma^2_{K+1}(P)\,] \geq \beta_{2d}$. Setting $\gamma = 2/\beta_{2d}$ in  \eqref{eq1} and using  the above equation, we get 
\[ \pr_X\left[\, \sum_{j > K} w_j^4(P_X) \,\leq\, a_d^2 \epsilon^2\, \left(\sum_{j > K} w_j^2(P_X)\right)^2\right] \geq \beta_{2d}/2,\]
where $a_d^2 = 2 \cdot 9^d/\beta_{2d}$. Thus, the polynomial $P_X(Y_{K+1},\ldots,Y_n)$ is $(a_d \epsilon)$-regular with probability at least $\gamma_d = \beta_{2d}/2$.
\end{proof}

\subsection{Proof of {\newref[Lemma]{case2}}}
We use the follwing simple lemma.
\begin{lemma}\label{simple1}
For $1 \leq i < j < K(P,\epsilon)$, $\sigma^2_j(P) \leq (1-\epsilon^2)^{j-i}\sigma^2_i(P)$.
\end{lemma}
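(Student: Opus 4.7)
The plan is to reduce the defining condition of $K(P,\epsilon)$ to a single-weight inequality, and then telescope. Since the variables are ordered so that $w_1(P) \geq w_2(P) \geq \cdots \geq w_n(P)$, the quantifier ``$w_{j'}^2(P) \leq \epsilon^2 \sigma_{i+1}^2(P)$ for all $j' > i$'' is equivalent to the single inequality $w_{i+1}^2(P) \leq \epsilon^2 \sigma_{i+1}^2(P)$, because $w_{i+1}^2(P)$ is the largest among the $w_{j'}^2(P)$ with $j' > i$. Therefore $K = K(P,\epsilon)$ is exactly the smallest index $i$ for which $w_{i+1}^2(P) \leq \epsilon^2 \sigma_{i+1}^2(P)$, and by the minimality of $K$, the strict reverse inequality must hold below it: $w_k^2(P) > \epsilon^2 \sigma_k^2(P)$ for every integer $1 \leq k \leq K$.

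With this key inequality in hand, I would telescope. The trivial identity $\sigma_{k+1}^2(P) = \sigma_k^2(P) - w_k^2(P)$, combined with $w_k^2(P) > \epsilon^2 \sigma_k^2(P)$, gives
\[
\sigma_{k+1}^2(P) < (1 - \epsilon^2)\, \sigma_k^2(P).
\]
Applying this successively for $k = i, i+1, \ldots, j-1$ yields $\sigma_j^2(P) \leq (1-\epsilon^2)^{j-i} \sigma_i^2(P)$, as required. The iteration is valid throughout this range because $i \geq 1$ and $j < K$ force each index $k$ used to lie in $[1, K-1] \subseteq [1,K]$, which is precisely where the key inequality holds.

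There is no real obstacle here; the proof is just a careful unpacking of the definition of the critical index together with the monotonicity of the weights. The only subtlety worth flagging explicitly is making sure the range of $k$ on which $w_k^2(P) > \epsilon^2 \sigma_k^2(P)$ covers the full telescoping range $[i, j-1]$, which is why the hypothesis $j < K$ (rather than $j \leq K$) is exactly what is needed.
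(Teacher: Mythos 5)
Your proof is correct and follows essentially the same route as the paper's: unwind the definition of the critical index via the monotonicity of the weights to get $w_k^2(P) \geq \epsilon^2 \sigma_k^2(P)$ below $K$, then telescope the identity $\sigma_{k+1}^2(P) = \sigma_k^2(P) - w_k^2(P)$. The paper states the single-step inequality $\sigma_{i+1}^2(P) \leq (1-\epsilon^2)\sigma_i^2(P)$ for $1 \leq i < K$ and lets the iteration be implicit; you have merely made the quantifier reduction and the range of valid indices explicit.
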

\begin{proof}
For $1 \leq i < K(P,\epsilon)$, we have 
\[ \sigma^2_i(P) = w^2_i(P) + \sigma^2_{i+1}(P) \geq \epsilon^2 \sigma^2_i(P) + \sigma^2_{i+1}(P).\]
Thus, $\sigma^2_{i+1}(P) \leq (1-\epsilon^2) \sigma^2_i(P)$. The lemma follows.
\end{proof}

\begin{proof}[Proof of {\newref[Lemma]{case2}}]
Suppose that $K(P,\epsilon) \geq L = c\log(1/\epsilon)/\epsilon^2$ for a constant $c$ to be chosen later and let $Q(X_1,\ldots,X_n) = P(X_1,\ldots,X_n) - P_{|L}(X_1,\ldots,X_L)$. The proof proceeds as follows. We first show that $\|Q\|$ is significantly smaller than $\|P_{|L}\|$. We then use \newref[Lemma]{polylb1} applied to $P_{|L}-\theta$ and Markov's inequality applied to $|Q(X)|$ to show that $|P_{|L}(X_1,\ldots,X_L)-\theta|$ is larger than $|Q(X)|$, so that $Q(X)$ cannot flip the sign of $P_{|L}(X_1,\ldots,X_L)-\theta$, with at least a constant probability. We first bound $\|Q\|$. %in \newref[Claim]{case2cl}, whose proof is defered to \aref{randrest}. 
\begin{claim}\label{case2cl}
For a suitably large enough constant $c_d$, $\|Q\|\leq \sqrt{\epsilon}\,\alpha_d \, \|P_{|L}\|$.
\end{claim}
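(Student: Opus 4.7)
The plan is to split the squared norm orthogonally as $\|P\|^2 = \|P_{|L}\|^2 + \|Q\|^2$, which holds because $P_{|L}$ collects the monomials $X^I$ with $I \subseteq [L]$ and $Q = P - P_{|L}$ collects those with $I \not\subseteq [L]$. Thus it suffices to show that $\|Q\|^2$ is small compared to $\|P\|^2$; by rearrangement this will translate into the desired bound against $\|P_{|L}\|^2$. I would bound $\|Q\|^2$ by the tail weight $\sigma_{L+1}^2(P) = \sum_{j > L} w_j^2(P)$: since every monomial appearing in $Q$ has at least one index $j > L$, charging it to any such index gives
\[
\|Q\|^2 \;=\; \sum_{I \not\subseteq [L]} a_I^2 \;\leq\; \sum_{j > L}\sum_{I \ni j} a_I^2 \;=\; \sigma_{L+1}^2(P).
\]

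The key step is to invoke \newref[Lemma]{simple1}, which exploits the definition of the critical index together with the ordering $w_1(P) \geq w_2(P) \geq \cdots$ to show that $\sigma_i^2(P)$ decays geometrically by a factor of $(1-\epsilon^2)$ per step up to the critical index. Since $K(P,\epsilon) \geq L$ by hypothesis, this yields
\[
\sigma_{L+1}^2(P) \;\leq\; (1-\epsilon^2)^{L-1}\,\sigma_1^2(P) \;\leq\; d\,e^{-\epsilon^2(L-1)}\,\|P\|^2,
\]
where the second inequality uses the trivial estimate $\sigma_1^2(P) = \sum_I |I|\,a_I^2 \leq d\,\|P\|^2$. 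Plugging in $L = c_d \log(1/\epsilon)/\epsilon^2$, the right-hand side becomes $O(d)\,\epsilon^{c_d}\,\|P\|^2$.

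Combining this with the orthogonal decomposition gives $\|Q\|^2 \leq O(d)\,\epsilon^{c_d}\bigl(\|P_{|L}\|^2 + \|Q\|^2\bigr)$, which I then rearrange (once $c_d$ is large enough that the prefactor is at most $1/2$) to get $\|Q\|^2 \leq O(d)\,\epsilon^{c_d}\,\|P_{|L}\|^2$. Taking $c_d$ still larger so that $O(d)\,\epsilon^{c_d - 1} \leq \alpha_d^2$ in the range of $\epsilon$ of interest then delivers $\|Q\|^2 \leq \epsilon\,\alpha_d^2\,\|P_{|L}\|^2$, which is the claim. The main obstacle is purely bookkeeping: pinning down $c_d$ as a function of $d$ and $\alpha_d$, and handling a minor off-by-one when invoking \newref[Lemma]{simple1} (which formally applies only for indices strictly below $K(P,\epsilon)$, so one must either replace $L$ by $L-1$ or enlarge $c_d$ accordingly). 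Neither step is substantive, and no new ideas beyond the orthogonal split and the geometric decay of \newref[Lemma]{simple1} appear to be needed.
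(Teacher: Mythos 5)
Your proof is correct and takes essentially the same approach as the paper's: bound $\|Q\|^2$ by a tail weight $\sigma^2_\ell(P)$, apply the geometric decay of \lref{simple1}, and compare against $\|P_{|L}\|^2$ by rearranging. Your use of the orthogonal split $\|P\|^2 = \|P_{|L}\|^2 + \|Q\|^2$ together with $\sigma_1^2(P) \leq d\|P\|^2$ is a slightly cleaner way to package the chain of inequalities the paper uses to relate $\sigma_1^2(P)$ to $\|P_{|L}\|^2$ and $\sigma_L^2(P)$; the off-by-one in invoking \lref{simple1} that you flag is present in the paper's proof as well and is indeed harmless.
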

\begin{proof}
Let $\alpha_d, \beta_d$ be the constants from \newref[Lemma]{polylb1}. By definition $\|Q\|^2 = \sum_{I: I \not\subseteq [L]} a_I^2 \leq \sigma^2_L(P)$. Now,
\begin{align*}\label{}
\sigma^2_1(P) &= \sum_{j < L} w^2_j(P) + \sigma^2_L(P)\\
 &\leq d \sum_{I: I \cap [L] \neq \emptyset} a_I^2 + \sigma^2_L(P)\\
 &\leq d \sum_{I:\emptyset \neq I \subseteq [L]} a_I^2 + d \sum_{I: I \not\subseteq [L]} a_I^2 + \sigma^2_L(P)\\
 &\leq d \sum_{I: \emptyset\neq I \subseteq [L]} a_I^2 + d \sum_{j > L} w^2_j(P) + \sigma^2_L(P)\\
 &\leq d \sum_{I: \emptyset \neq I \subseteq [L]} a_I^2 + (d+1)\, \sigma^2_L(P).
\end{align*}
Further, by \newref[Lemma]{simple1}, $\sigma^2_L(P) \leq (1-\epsilon^2)^{L-1} \sigma^2_1(P) $.
Combining the above inequalities we get,
\begin{equation}\label{eq2-2}
 \sigma^2_L(P) \leq O_d\left(\,(1-\epsilon^2)^{L-1}\,\right) \sum_{I:\emptyset\neq  I \subseteq [L]} a_I^2 = O_d\left(\,(1-\epsilon^2)^{L-1}\,\right) \,\sigma^2(P).  
\end{equation}
Choosing $L = c_d \log(1/\epsilon)/\epsilon^2$ for large enough $c_d$, we get the claim.
\end{proof}

%Let $D$ be a $2d$-wise independent distribution over $\dpm^{n-L}$. Let $D'$ be the product distribution $\mathcal{U}_L \times D$ over $\dpm^n$, where $\mathcal{U}_L$ denotes the uniform distribution over $\dpm^L$. Then, since $D'$ is $2d$-wise independent,
%\[ \ex_{x \leftarrow D'}\,[\,Q(x_1,\ldots,x_n)^2\,] = \|Q\|^2.\]
By \newref[Claim]{case2cl} and Markov's inequality,
\begin{equation}\label{eqcase2-1}
 \pr_{x \in_u \dpm^n} \left[\,\left|Q(x_1,\ldots,x_n)\right| \geq \alpha_d \,\|P_{|L}\| \,\right] \leq \pr_{x \in_u \dpm^n} \left[\,\left|Q(x_1,\ldots,x_n)\right| \geq \|Q\|/\sqrt{\epsilon}\,\right] \leq \epsilon.
\end{equation}
Let $S \subseteq \dpm^L$ be the set of all {\em bad} $x^L \in \dpm^L$ such that,
\[ \pr_{(X_{L+1},\ldots,X_n) \in_u \dpm^n}\left[\,\left|Q(x_1,\ldots,x_L,X_{L+1},\ldots,X_n)\right| \geq \alpha_d \,\|P_{|L}\|\,\right] \geq 2 \epsilon/\beta_d.\]
Then, from \eqref{eqcase2-1} and the above equation, $\pr_{x^L \in_u \dpm^L}\left[ x^L \in S\right] \leq \beta_d/2$.
 Now, let $T \subseteq \dpm^L$ be such that for $x^L \in T$, $|\,P_{|L}(x_1,\ldots,x_L) - \theta\,| \geq \alpha_d\, \|P_{L}\|$ and $x^L \notin S$. Observe that all $x^L \in T$ are $(2\epsilon/\beta_d)$-determining and by \newref[Lemma]{polylb1} and the above equations,
\[ \pr_{x^L \in_u \dpm^L}\left[ x^L \in T\right] \geq \pr_{x^L \in_u \dpm^L}[\left|P_{|L}(x_1,\ldots,x_L) - \theta\right| \geq \alpha_d\, \|P_{L}\|] - \pr_{x^L \in_u \dpm^L}\left[ x^L \in S\right] \geq \beta_d/2.\]
The lemma now follows.
\end{proof}

\section{Gaussian Noise Sensitivity of PTFs} \label{sec:gns}
%In this section, we bound the Gaussian noise sensitivity of PTFs and thus prove \newref[Theorem]{thm:gns}.  We consider this a ``warm-up'' for the case of Boolean noise sensitivity, since the Boolean case, in some sense, reduces to the ``regular'' or Gaussian case.  Also, the proof is considerably simpler and only makes use of an anti-concentration bound for polynomials in Gaussian space
In this section, we bound the Gaussian noise sensitivity of PTFs and thus prove \newref[Theorem]{thm:gns}. The proof is simpler than the Boolean case and only makes use of an anti-concentration bound for polynomials in Gaussian space.

Although \newref[Theorem]{thm:gns} was stated only for multilinear polynomials and ellipsoids, we give a proof below that works for all degree $d$ polynomials using ideas from Diakonikolas et al.~\cite{DiakonikolasRST2009}, who were the first to prove a bound on the Gaussian noise sensitivity of general degree $d$ polynomials (see remarks after the statement of \newref[Claim]{claim:smallQ}). 

\begin{proof}[Proof of {\newref[Theorem]{thm:gns}}]
Let $f$ be the degree $d$ PTF and $P$ the corresponding degree $d$ polynomial such that $f(x) = \sign(P(x))$. We may assume without loss of generality. that $P$ is normalized, i.e., $\|P\|^2 = \ex[P^2(X)] =1$. 

The proof is based on the Carbery-Wright anti-concentration bound (\newref[Theorem]{cw}) for degree $d$ PTFs. Let $X,Y \sim \NN$ and $Z \eqdef (1-\delta)\,X + \sqrt{1-(1-\delta)^2}\,Y =(1-\delta)\,X + \sqrt{2\delta - \delta^2}\,Y$. Let $\rho = \sqrt{2\delta - \delta^2}$. Define the perturbation polynomial $Q(X,Y) = P(Z) - P(X) = P((1-\delta)X+\rho Y) - P(X)$.

Now, for $\gamma > 0$ to be chosen later,
\begin{align*}
  \pr[\sgn(P(X)) \neq \sgn(P(Z))] &= \pr[\sgn(P(X)) \neq \sgn(P(X) + Q(X,Y))]\\
  &= \pr[|P(X)| < |Q(X,Y)|]\\
  &\leq \pr[|P(X)| < \gamma] + \pr[|Q(X,Y) > \gamma]\\
  &\leq C_d \gamma^{1/d} + \pr[|Q(X,Y)| > \gamma],
\end{align*}
where the last inequality follows from the anti-concentration bound (\newref[Theorem]{cw}). 
 In \newref[Claim]{claim:smallQ}, we show that the norm $\|Q\|$ of the pertubation polynomial is at most $c_d\sqrt{\delta}$ for some constant $c_d$ (dependant on $d$).   We can now apply Markov's inequality to bound the second quantity as follows.
\[\pr[|Q(X,Y)| > \gamma] \leq \|Q\|^2 /\gamma^2 \leq c_d\, \delta/\gamma^2.\]
Thus, 
\[ \GNS_\delta(f) \leq C_d \gamma^{1/d} + \frac{c_d \delta}{\gamma^2}.\]
The theorem follows by setting $\gamma = \delta^{d/(2d+1)}$ in which case we get $\GNS_\delta(f) = O_d(\delta^{1/(2d+1)})$.
\end{proof}

We note that we can get a slightly stronger bound of $O_d\left(\delta^{1/2d}\sqrt{\log(1/\delta)}\right)$ if we used a stronger tail bound instead of Markov's in the above argument.

\begin{claim}\label{claim:smallQ}There exists a constant $c_d$ such that $\|Q\| \leq c_d\sqrt{\delta}$.
\end{claim}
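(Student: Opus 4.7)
The plan is to compute $\|Q\|^2$ directly by expanding the square and using the fact that the map $X \mapsto Z = (1-\delta)X + \rho Y$ realizes, in distribution, the Ornstein--Uhlenbeck noise operator $T_{1-\delta}$ on $L^2(\NN)$. Since $Z$ has the same marginal distribution as $X$ (namely $\NN$), $\ex[P(Z)^2] = \ex[P(X)^2] = \|P\|^2 = 1$, and so
\[
\|Q\|^2 \;=\; \ex_{X,Y}\bigl[(P(Z) - P(X))^2\bigr] \;=\; 2 \;-\; 2\,\ex_{X,Y}[P(Z)\,P(X)].
\]
It therefore suffices to lower-bound the cross term $\ex[P(Z)\,P(X)]$.

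For that, I would expand $P$ in the orthonormal multivariate Hermite basis: $P = \sum_{|S|\le d}\hat P(S)\,h_S$, where $S$ ranges over multi-indices in $\N^n$ of total degree $|S|\le d$ and $h_S(x) = \prod_i h_{S_i}(x_i)$ is the product of the normalized one-dimensional Hermite polynomials. Normalization gives $\sum_S \hat P(S)^2 = \|P\|^2 = 1$. The key classical fact I would invoke is that each $h_S$ is an eigenfunction of the noise operator with eigenvalue $(1-\delta)^{|S|}$, i.e.\ $\ex_Y[h_S((1-\delta)x + \rho Y)] = (1-\delta)^{|S|}\,h_S(x)$. Combining this with orthonormality of $\{h_S\}$ under $\NN$,
\[
\ex_{X,Y}[P(Z)\,P(X)] \;=\; \sum_{|S|\le d} (1-\delta)^{|S|}\,\hat P(S)^2,
\]
and substituting back yields
\[
\|Q\|^2 \;=\; 2\sum_{|S|\le d}\bigl(1-(1-\delta)^{|S|}\bigr)\,\hat P(S)^2.
\]

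To finish, I would apply Bernoulli's inequality $(1-\delta)^k \geq 1-k\delta$ for integers $k\ge 0$ and $\delta\in[0,1]$, which gives $1-(1-\delta)^{|S|}\le |S|\,\delta \le d\,\delta$ for every $S$ appearing in the sum. Hence $\|Q\|^2 \le 2d\delta\sum_S \hat P(S)^2 = 2d\delta$, so $\|Q\|\le \sqrt{2d}\,\sqrt{\delta}$, which is of the desired form $c_d\sqrt{\delta}$ with the modest constant $c_d = \sqrt{2d}$. This argument is uniform over all degree-$d$ polynomials, multilinear or not, so it covers the extension to general (non-multilinear) $P$ promised in the proof of \tref{gns}, and no separate case analysis is needed for ellipsoids. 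I do not anticipate a real obstacle: the only nontrivial ingredient is the diagonalization of the OU operator in the Hermite basis, which is a standard fact about Gaussian space.
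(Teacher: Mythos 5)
Your proof is correct, and it is a genuinely different (and cleaner) route than the one in the paper. The paper expands $\|Q\|^2 = \sum_S \hat{P}_S^2\,\ex[(H_S(Z)-H_S(X))^2]$ after checking that the off-diagonal terms $\ex[H_S(Z)H_T(X)]$ with $S\neq T$ vanish, but it then bounds the diagonal terms $\ex[(H_S(Z)-H_S(X))^2]$ rather laboriously: a Taylor expansion of $H_S(Z)$ around $X$, Cauchy--Schwarz, explicit moments of $Z_i-X_i\sim\mathcal{N}(0,\sqrt{2\delta})$, and finally $(1,2)$-hypercontractivity in Gaussian space to upgrade an $L^1$ bound to an $L^2$ bound, producing a constant $c_d = 2^{O(d)}$. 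You instead invoke, in one step, the full spectral fact that the Ornstein--Uhlenbeck operator $T_{1-\delta}$ diagonalizes in the Hermite basis with eigenvalues $(1-\delta)^{|S|}$; this subsumes the paper's off-diagonal orthogonality computation and moreover computes the diagonal terms \emph{exactly}: $\ex[(H_S(Z)-H_S(X))^2] = 2\bigl(1-(1-\delta)^{|S|}\bigr)$. Bernoulli's inequality then finishes, yielding $\|Q\| \leq \sqrt{2d}\,\sqrt{\delta}$ --- a polynomial-in-$d$ constant rather than exponential. Both arguments work for all degree-$d$ polynomials, not just multilinear ones, and yours requires no additional case analysis. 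The one ingredient you rely on that the paper avoids (perhaps deliberately, for self-containment) is the eigenfunction property of the OU semigroup; it is a standard fact, so this is a fair trade for a shorter proof and a sharper constant.
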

An earlier version of this paper had an error in the proof of this claim. As pointed out to us by the authors of \cite{DiakonikolasRST2009}, that proof worked only for multilinear polynomials and ellipsoids. Diakonikolas~et al.~\cite{DiakonikolasRST2009} proved the claim for general degree $d$ polynomials. For the sake of completeness, we give a simplified presentation of their proof (that works for all degree $d$ polynomials) in \newref[Section]{sec:claimproof}.

\section{Noise sensitivity of PTFs}\label{sec:bns}
We now bound the noise sensitivity of PTFs and prove \newref[Theorem]{thm:bns}. We do so by first bounding the noise sensitivity of regular PTFs and then use the results of the previous section to reduce the general case to the regular case.

\subsection{Noise sensitivity of Regular PTFs}
At a high level, we bound the noise sensitivity of regular PTFs as follows: (1) Reduce the problem to that of proving certain anti-concentration bounds for regular PTFs over the hypercube. (2) Use the invariance principle of Mossel et al.~\cite{MosselOO2005} to reduce proving anti-concentration bounds over the hypercube to that of proving anti-concentration bounds over Gaussian distributions. (3) Use the Carbery-Wright anti-concentration bounds~\cite{CarberyW2001} for polynomials over log-concave distributions.% This step is similar to the Gaussian noise sensitivity analysis.

For the rest of this section, we fix degree $d$ multilinear polynomial $P$ and a corresponding degree $d$ PTF $f$. Recall that it suffices to consider multilinear polynomials as we are working over the hypercube. We first reduce bounding noise sensitivity to proving anti-concentration bounds. 
\begin{lemma}\label{actons}
For $ 0 < \rho < 1$, $\delta > 0$, $$\NS_\rho(f) \leq (d +1)\,\delta + \pr_{x \in \dpm^n}[\,|P(x)-\theta| \leq 2 \sqrt{\rho}/\delta\,].$$
\end{lemma}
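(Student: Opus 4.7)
The plan is to relate the sign-change event $\{f(X)\neq f(Y)\}$ to an anti-concentration event for $P$ near the threshold $\theta$, by separating the behaviour of $P(X)-\theta$ from the perturbation polynomial $P(X)-P(Y)$. Throughout, I will use without loss of generality that $\|P\|^2 = \sum_I a_I^2 = 1$, since the statement is invariant under a joint rescaling of $P$ and $\theta$.

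First I would make the elementary observation that whenever two real numbers $a,b$ have opposite signs, $|a-b|\geq |a|$. Applying this with $a = P(X)-\theta$ and $b = P(Y)-\theta$, and noting that $a-b = P(X)-P(Y)$, I get the implication
\[
f(X)\neq f(Y)\ \implies\ |P(X)-\theta|\ \leq\ |P(X)-P(Y)|.
\]
So, setting the threshold $\gamma = 2\sqrt{\rho}/\delta$ and union-bounding,
\[
\NS_\rho(f)\ \leq\ \Pr\bigl[|P(X)-\theta|\leq \gamma\bigr] + \Pr\bigl[|P(X)-P(Y)| > \gamma\bigr].
\]
The first summand is exactly the anti-concentration quantity appearing on the right-hand side of the lemma, so all the work is in controlling the second summand by $(d+1)\delta$.

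For the second summand I would use a simple second-moment/Markov argument. Expanding $P$ in the Fourier basis on $\bn$ and using that $Y$ is a $\rho$-perturbation of $X$ (so $\ex[X^I Y^I] = (1-2\rho)^{|I|}$),
\[
\ex\bigl[(P(X)-P(Y))^2\bigr] \;=\; 2\sum_I a_I^2\bigl(1-(1-2\rho)^{|I|}\bigr) \;\leq\; 2\sum_I a_I^2 \cdot 2\rho |I| \;\leq\; 4\rho d,
\]
by Bernoulli's inequality and $|I|\leq d$, together with $\|P\|^2=1$. Markov's inequality on $(P(X)-P(Y))^2$ with threshold $\gamma^2 = 4\rho/\delta^2$ then gives $\Pr[|P(X)-P(Y)|>\gamma]\leq d\delta^2$, which is at most $(d+1)\delta$ whenever $\delta\leq 1$ (and the lemma is trivial for $\delta\geq 1$ since $\NS_\rho(f)\leq 1$).

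There is no real obstacle here; the lemma is a clean ``sign-change implies closeness'' argument combined with the standard second-moment of a noisy low-degree polynomial. The only thing to be careful about is the scaling convention for $P$ and keeping track that the looseness between $d\delta^2$ and $(d+1)\delta$ is absorbed by the trivial range $\delta\geq 1$; everything else is routine.
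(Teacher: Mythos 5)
Your proof is correct, and the core decomposition is the same as the paper's: the sign-change event is contained in $\{|P(X)-\theta|\leq |P(X)-P(Y)|\}$, which is split by a union bound into an anti-concentration event for $P(X)-\theta$ and a tail event for the perturbation polynomial, the latter controlled by a second-moment estimate of order $\rho d$. Where you diverge is in how the tail is bounded: the paper first conditions on the random flip-set $S$, introduces $P_S^2=\sum_{I:|I\cap S|\text{ odd}}a_I^2$, and chains two Markov inequalities (one over $X$ at scale $P_S/\sqrt{\delta}$, one over $S$ at scale $\sqrt{\rho/\delta}$) to reach $(d+1)\delta$; you compute $\ex\left[(P(X)-P(Y))^2\right]=2\sum_I a_I^2\left(1-(1-2\rho)^{|I|}\right)\leq 4\rho d$ directly (which is exactly $4\ex_S[P_S^2]$) and apply Markov once, giving the marginally sharper $d\delta^2$, absorbed into $(d+1)\delta$ for $\delta\leq 1$. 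Both routes are fine; yours is a bit cleaner. One small quibble: the lemma as written is \emph{not} invariant under a joint rescaling of $(P,\theta)$, since the right-hand side involves the absolute quantity $2\sqrt{\rho}/\delta$; the normalization $\|P\|=1$ is really an implicit standing hypothesis (the paper's own proof needs $\|P\|\leq 1$ when it asserts $\sum_i w_i^2(P)\leq d$), so your ``without loss of generality'' should be phrased as ``as the lemma implicitly assumes.''
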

\begin{proof}
Let $S$ be a random subset $S \subseteq [n]$ where each $i \in [n]$ is in $S$ independently with probability $\rho$. From the definition of noise sensitivity it easily follows that 
\begin{align}\label{refns-0}
\NS_\rho(f) & = \pr_{X \in_u \dpm^n,S}\,[\, \sign\left(P(X) -\theta\right) \neq \sign(\,P(X) - 2\sum_{I:|I\cap S|\text{ is odd}} a_I X^I-\theta\,)\,]\nonumber\\
& = \pr_{X \in_u \dpm^n,S}[\,\left|P(x)-\theta\right| \leq 2\, |\,\sum_{I:|I\cap S|\text{ is odd}} a_I X^I\,|\,]\nonumber\\
& \leq \pr_{X \in_u\dpm^n, S}\,[\,|\,\sum_{I:|I\cap S|\text{ is odd}} a_I X^I\,|\,\geq \sqrt{\rho}/\delta\,] + \pr_{X \in_u \dpm^n}\,[\,|\, P(X) - \theta\,|\, \leq 2\sqrt{\rho}/\delta\,]
\end{align}
Define a non-negative random variable $P_S$ as follows: $P_S^2 = \sum_{I: |I \cap S|\text{ is odd}} a_I^2$. We can then bound the first quantity in the above expression using $P_S$ as follows:
\begin{multline}\label{refns-1}
 \pr_{X \in_u\dpm^n, S}\,[\,|\,\sum_{I:|I\cap S|\text{ is odd}} a_I X^I\,|\,\geq \sqrt{\rho}/\delta\,]  \leq \pr_{X \in_u\dpm^n, S}\,[\,|\,\sum_{I:|I\cap S|\text{ is odd}} a_I X^I\,|\,\geq P_S/\sqrt{\delta}\,]\\ + \pr_{S}\,[\,P_S \geq \sqrt{\rho}/\sqrt{\delta}\,] 
\end{multline}
Since $\ex_X (\,\sum_{I |I\cap S| \text{ is odd}}a_I X^I\,)^2=P_S^2$, by Markov's inequality, we have
\begin{equation}\label{refns-2}
 \pr_{x \in_u \dpm^n}\,[\, \,|\,\sum_{I: |I \cap S|\text{ is odd}} a_I X^I\,|\, \geq P_S/\sqrt{\delta}\,] \leq \delta.
\end{equation}
Now, note that $P_S^2 \leq \sum_{i \in S} w_i^2(P)$. Thus, $\ex_S[P^2_S] \leq \ex_S[\,\sum_{i\in S} w_i^2(P)\,] = \rho \sum_i w_i^2(P) \leq d \, \rho$. Hence, by Markov's inequality, $\pr_S[\,P_S \geq \sqrt{\rho}/\sqrt{\delta}\,] \leq d \,\delta$.
The lemma now follows by combining Equations \eqref{refns-0}, \eqref{refns-1}, \eqref{refns-2} and the above equation.
\end{proof}

We now prove an anti-concentration bound for regular PTFs.

\begin{lemma}\label{regac}
If $P$ is $\epsilon$-regular, then for any interval $I\subseteq \reals$ of length at most $\alpha$, $$\pr_{X \in_u \dpm^n}[\,P(X) \in I\,] = O_d(\, \alpha^{1/d} + \epsilon^{2/(4d+1)}\,).$$
\end{lemma}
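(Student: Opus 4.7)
My plan is to transfer the anti-concentration question from the hypercube to Gaussian space via the invariance principle (\newref[Theorem]{invariance}), and then apply the Carbery--Wright anti-concentration bound (\newref[Theorem]{cw}) in Gaussian space. Since the $\epsilon$-regularity condition is scale-invariant, I may first rescale $P$ and the interval $I$ so that $\|P\| = 1$; this absorbs $\|P\|$ into the $\alpha$ on the right-hand side and does not affect the form of the claimed bound.

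Writing $I = [t, t+\alpha]$, I would decompose
\[
\pr_{X \in_u \dpm^n}[P(X) \in I] \;=\; \pr_X[P(X) \leq t+\alpha] \;-\; \pr_X[P(X) < t],
\]
and then apply the invariance principle at each of the two thresholds. Because the Boolean distribution of $P(X)$ has atoms while the Gaussian CDF of $P(Y)$ is continuous, I would approach the $\leq$-side threshold by a harmless limit from above. In both cases, the invariance principle converts the Boolean CDF value into the corresponding Gaussian CDF value at additive cost $C^d\eps^{2/(4d+1)}$, so subtracting yields
\[
\pr_X[P(X) \in I] \;\leq\; \pr_{Y \lfta \NN}[P(Y) \in I] + 2 C^d \eps^{2/(4d+1)}.
\]

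To conclude, I would invoke the Carbery--Wright bound on the unit-norm, degree-at-most-$d$ polynomial $P$ in Gaussian space, which gives $\pr_{Y \lfta \NN}[P(Y) \in I] \leq C' d\, \alpha^{1/d}$. Adding the two contributions produces the claimed $O_d(\alpha^{1/d} + \eps^{2/(4d+1)})$ bound.

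The only conceptually nontrivial step is the invariance transfer, which crucially relies on the $\epsilon$-regularity of $P$; without it the transfer error could be $\Omega(1)$, and the argument would break down. Apart from careful tracking of the $d$-dependent constants from the two cited theorems (and the minor bookkeeping around boundary atoms, mentioned above), I do not anticipate any further obstacle --- the proof is essentially a direct composition of the invariance principle with the Gaussian anti-concentration bound.
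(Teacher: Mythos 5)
Your proposal is correct and follows essentially the same route as the paper: apply the invariance principle (\tref{invariance}) to transfer the CDF of $P(X)$ from the hypercube to Gaussian space, then invoke Carbery--Wright (\tref{cw}). The paper's proof is a compressed version of yours; the normalization $\|P\|=1$ and the decomposition of $\pr[P(X)\in I]$ into two CDF differences that you spell out are left implicit there.
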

\begin{proof}
Let $Z_1= P(X), Z_2= P(Y)$ for $X \in_u \dpm^n, Y \leftarrow \NN$. Then, since $P$ is $\epsilon$-regular, by \newref[Theorem]{invariance}, for all $t \in \reals$, $|\,\pr[Z_1 > t]  - \pr[Z_2>t]\,| = O_d(\epsilon^{2/(4d+1)})$. Now, by the above equation and \newref[Theorem]{cw} applied to the random variable $Y$ for interval $I$, $\pr[Z_1 \in I] = \pr[Z_2\in I] + O_d(\,\epsilon^{2/(4d+1)}\,) = O_d(\, \alpha^{1/d} + \epsilon^{2/(4d+1)}\,)$.
\end{proof}

We can now obtain a bound on noise sensitivity of regular PTFs. 
\begin{theorem}\label{nsregular}
If $f$ is an $\epsilon$-regular PTF of degree $d$, then $\NS_{\epsilon}(f) \leq O_d\left(\,\epsilon^{1/(2d+2)}\right)$.
\end{theorem}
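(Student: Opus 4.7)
The plan is to combine the two tools already established in this section. Lemma~\ref{actons} reduces bounding $\NS_\rho(f)$ to bounding an anti-concentration probability for $P$ on the hypercube, and Lemma~\ref{regac} gives exactly such an anti-concentration bound when $P$ is $\epsilon$-regular. The only real work left is to pick the free parameters in a consistent way.

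First, I would instantiate Lemma~\ref{actons} at $\rho = \epsilon$: for every $\delta > 0$,
\[
\NS_\epsilon(f) \;\leq\; (d+1)\,\delta \;+\; \pr_{X\in_u \dpm^n}\bigl[\,|P(X)-\theta| \leq 2\sqrt{\epsilon}/\delta\,\bigr].
\]
Next, since $P$ is $\epsilon$-regular of degree $d$, I would apply Lemma~\ref{regac} to the interval $I = [\theta - 2\sqrt{\epsilon}/\delta,\; \theta + 2\sqrt{\epsilon}/\delta]$ of length $\alpha = 4\sqrt{\epsilon}/\delta$, giving
\[
\pr_{X\in_u \dpm^n}\bigl[\,|P(X)-\theta| \leq 2\sqrt{\epsilon}/\delta\,\bigr]
\;=\; O_d\!\left(\,\bigl(\sqrt{\epsilon}/\delta\bigr)^{1/d} + \epsilon^{2/(4d+1)}\,\right).
\]
Combining, $\NS_\epsilon(f) \leq O_d\bigl(\delta + \epsilon^{1/(2d)}/\delta^{1/d} + \epsilon^{2/(4d+1)}\bigr)$.

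The final step is to balance the first two terms, which dictate the exponent. Setting $\delta = \epsilon^{1/(2d)}/\delta^{1/d}$ gives $\delta^{(d+1)/d} = \epsilon^{1/(2d)}$, i.e.\ the optimal choice is $\delta = \epsilon^{1/(2d+2)}$, at which both of these terms become $\epsilon^{1/(2d+2)}$. A quick check shows the residual term from Lemma~\ref{regac} is harmless: since $2(2d+2) = 4d+4 > 4d+1$, we have $2/(4d+1) > 1/(2d+2)$, so $\epsilon^{2/(4d+1)} \leq \epsilon^{1/(2d+2)}$ for $\epsilon \in (0,1)$. This yields $\NS_\epsilon(f) = O_d(\epsilon^{1/(2d+2)})$ as claimed.

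There is no real obstacle here; the heavy lifting has already been done by Lemma~\ref{actons} (which required $(2,4)$-hypercontractivity applied to the ``odd-parity'' part of $P$ under a random subset) and by Lemma~\ref{regac} (which uses the invariance principle plus Carbery--Wright to transfer anti-concentration from Gaussian space to the hypercube). The only thing to be careful about is tracking which exponent of $\epsilon$ dominates after the optimization, and confirming that the Lemma~\ref{regac} ``error term'' $\epsilon^{2/(4d+1)}$ is indeed absorbed rather than being the bottleneck.
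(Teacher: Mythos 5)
Your proof is correct and follows essentially the same route as the paper: apply Lemma~\ref{actons} with $\rho = \epsilon$, plug in Lemma~\ref{regac} for the anti-concentration term, and optimize $\delta = \epsilon^{1/(2d+2)}$. The paper states this more tersely but does exactly the same thing; your extra check that $\epsilon^{2/(4d+1)}$ is dominated by $\epsilon^{1/(2d+2)}$ is a correct detail the paper leaves implicit.
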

\begin{proof}
Let $\delta > 0$ to be chosen later. Then, by \newref[Lemma]{actons} and \newref[Lemma]{regac} above, $\NS_\epsilon(f) = O_d(\, \delta + \epsilon^{2/(4d+1)} + \epsilon^{1/2d}/\delta^{1/d}\,)$.
Choosing $\delta = \epsilon^{1/(2d+2)}$ we get $\NS_\epsilon(f) = O_d(\, \epsilon^{1/(2d+2)}\,)$.
\end{proof}

\subsection{Noise Sensitivity of arbitrary PTFs}\label{nsarbit}
We prove \newref[Theorem]{thm:bns} by recursively applying the following lemma. 
\begin{lemma}\label{mainlmns}
For every $d$ there exist universal constants $c_d,\Delta_d \in \N,\alpha_d \in(0,1)$ such that for $M = \min(K(P,\epsilon), c_d\log(1/\epsilon)/\epsilon^{2})$ and $X^M = (X_1,\ldots,X_M)\in_u \dpm^M$,
\begin{equation}\label{maineq}
  \pr_{X^M}\,\left[\, \NS_\epsilon(f_{X^M}) \leq \Delta_d \epsilon^{1/(2d+2)}\,\right] \geq \alpha_d.
\end{equation}
\end{lemma}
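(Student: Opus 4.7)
The plan is to dispatch the lemma via the structural dichotomy from \sref{randrest}, combined with the noise sensitivity bound for regular PTFs established in \newref[Theorem]{nsregular}. Let $L = c_d \log(1/\epsilon)/\epsilon^2$ denote the threshold from \newref[Lemma]{case2}, so that $M = \min(K, L)$ where $K = K(P,\epsilon)$. I would split the analysis on whether $K \le L$ (the \emph{small critical index} case) or $K > L$ (the \emph{large critical index} case), and show that each case individually supplies a restriction with the desired noise sensitivity with constant probability.

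\textbf{Small critical index case.} Here $M = K$, and \newref[Lemma]{case1} states that with probability at least $\gamma_d$ over $X^K \in_u \dpm^K$, the restricted polynomial $P_{X^K}$ is $(a_d \epsilon)$-regular. Since restriction preserves degree and multilinearity, $f_{X^K}$ is then an $(a_d\epsilon)$-regular degree-$d$ PTF, and \newref[Theorem]{nsregular} applied with parameter $a_d \epsilon$ yields $\NS_{a_d \epsilon}(f_{X^K}) = O_d\bigl(\epsilon^{1/(2d+2)}\bigr)$. To convert this into a bound on $\NS_\epsilon$, I would invoke the standard monotonicity of $\NS_\rho$ in the noise rate on $[0,1/2]$ (immediate from $\NS_\rho(f) = \tfrac{1}{2}\bigl(1 - \sum_S (1-2\rho)^{|S|} \hat f(S)^2\bigr)$): since we may assume $a_d \ge 1$, we have $\NS_\epsilon(f_{X^K}) \le \NS_{a_d \epsilon}(f_{X^K}) = O_d\bigl(\epsilon^{1/(2d+2)}\bigr)$ whenever $a_d \epsilon \le 1/2$ (and the lemma is trivial otherwise by taking $\Delta_d$ large).

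\textbf{Large critical index case.} Here $M = L$, and \newref[Lemma]{case2} guarantees that with probability at least $\delta_d$ a random $X^L \in_u \dpm^L$ is $(b_d \epsilon)$-determining for $f$: there exists $b \in \dpm$ with $\pr[f_{X^L}(Y) \ne b] \le b_d \epsilon$ for $Y \in_u \dpm^{n-L}$. For any such nearly-constant function, a union bound over the two draws of the noise sensitivity experiment immediately gives $\NS_\epsilon(f_{X^L}) \le 2 b_d \epsilon$, which is $\le \Delta_d \epsilon^{1/(2d+2)}$ once $\Delta_d$ absorbs the constant. Setting $\alpha_d = \min(\gamma_d, \delta_d)$ and $\Delta_d$ large enough to subsume the $O_d(\cdot)$ constants from both cases yields the claimed statement.

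\textbf{Main obstacle.} There is no substantive technical obstacle here; the heavy lifting has already been done in the structural \newref[Lemma]{case1}, \newref[Lemma]{case2}, and in \newref[Theorem]{nsregular}. The only mild irritant is the cosmetic mismatch between the regularity parameter $a_d \epsilon$ output by \newref[Lemma]{case1} and the noise rate $\epsilon$ appearing in the target bound, which I would reconcile via monotonicity of $\NS_\rho$ as described; an alternative route is to re-run \newref[Lemma]{case1} with $\epsilon/a_d$ in place of $\epsilon$, folding the resulting constant-factor increase in the critical index into the constant $c_d$ defining $L$. Either route leaves only routine bookkeeping of constants.
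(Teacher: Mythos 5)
Your proposal matches the paper's proof essentially line for line: the same case split on whether $M = K(P,\epsilon)$ or $M = c_d\log(1/\epsilon)/\epsilon^2$, the same invocation of \newref[Lemma]{case1} followed by \newref[Theorem]{nsregular} in the first case, and \newref[Lemma]{case2} followed by the union bound $\NS_\epsilon(f_{X^M}) \leq 2b_d\epsilon$ in the second, with $\alpha_d = \min(\gamma_d,\delta_d)$. The paper silently glosses over the $a_d\epsilon$-regular versus $\epsilon$-noise-rate mismatch that you flag; your monotonicity fix is sound (as is your alternative of rescaling inside \newref[Lemma]{case1}), and one can also simply observe that the proof of \newref[Theorem]{nsregular} runs unchanged with regularity parameter $a_d\epsilon$ and noise rate $\epsilon$.
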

\begin{proof}
Let $a_d,b_d,c_d, \gamma_d,\delta_d$ be the constants from Lemmas~\ref{case1},~\ref{case2}. Let $\alpha_d = \min(\gamma_d,\delta_d)$. We consider two cases.

\noindent {Case (i): }$M = K(P,\epsilon)$. Then, by \newref[Lemma]{case1} and \newref[Theorem]{nsregular}, for $X^K \in_u \dpm^K$, with probability at least $\alpha_d$, $\NS_\epsilon(f_{x^K}) \leq \Delta_d \epsilon^{1/(2d+2)}$ for some constant $\Delta_d$.

\noindent {Case (ii):} $M =   c_d\log(1/\epsilon)/\epsilon^{2}$. Then, by \newref[Lemma]{case2}, $X^M \in_u \dpm^M$ is $b_d\epsilon$-determining with probability at least $\alpha_d$. Further, if $X^M$ is $b_d\epsilon$-determining, with $f_{X^M}$ biased towards $b \in \dpm$, then 
\begin{multline*}
  \NS_\epsilon(f_{X^M}) =  \pr_{Z_1 \in_u \dpm^{n-M}, Z_2 \in_\epsilon Z_1}\,[ \,f_{X^M}(Z_1) \neq f_{X^M}(Z_2)\,] \leq 2\, \pr_{Z \in_u \dpm^{n-M}}\,[\,f_{X^M}(Z) \neq b\,] \leq 2b_d\epsilon,
\end{multline*}
where $Z_2 \in_\epsilon Z_1$ is an $\epsilon$-perturbation of $Z_1$. The lemma now follows.
\end{proof}

\begin{proof}[Proof of {\newref[Theorem]{thm:bns}}]
Let $c_d,\Delta_d,\alpha_d$ be as in the above lemma and let $L = c_d \log(1/\epsilon)/\epsilon^2$, $t = \log_{1-\alpha_d}(1/\epsilon)$. We will show that for $\delta = \epsilon^{1/(2d+2)}/(L\, t)= O_d(\epsilon^{(4d+5)/(2d+2)}/\log^2(1/\epsilon))$,
\[ \NS_{\delta}(f) = O_d(\, \epsilon^{1/(2d+2)}\,).\]

For $S \subseteq [n]$ and $x \in \dpm^n$ let $P_{x,S}:\dpm^{\bar{S}} \rgta \reals$ be the degree at most $d$ polynomial defined by $P_{x,S}(X_{\bar{S}}) = P(x_{|S},X_{\bar{S}})$.. Fix a $x = (x_1,\ldots,x_n) \in \dpm^n$ and define $S_{x,i} \subseteq [n]$ for $i \geq 1$, recursively as follows. $S_{x,1}$ is the set of $M_1 \leq L$ largest weight coordinates in $P$ given by applying \newref[Lemma]{mainlmns} to $P$. For $i \geq 1$, let $S^{x,i} = S_{x,1} \cup S_{x,2} \cup \ldots \cup S_{x,i}$.

For $i > 1$, let $S_{x,i+1}$ be the set of $M_{i+1} \leq L$ largest weight coordinates in $P_{x,S^{x,i}}$ given by applying \newref[Lemma]{mainlmns} to the polynomial $P_{x,S^{x,i}}$. Define $f_{x,i}$ by $f_{x,i}(\cdot) \equiv \sgn(P_{x,S^{x,i}}(\cdot)-\theta)$. Note that the definition of $f_{x,i}$ only depends on $x_j$ for $j \in S^{x,i}$ and that $|S^{x,i}| \leq L \cdot i$.

Call $x \in \dpm^n$ $(\epsilon,f)$-good if there exists an $i$, $1 \leq i \leq t$ such that $\NS_{\epsilon}(f_{x,i}) \leq  \Delta_d\, \epsilon^{1/(2d+2)}$  and let $t_x$ be such an $i$ for a $(\epsilon,f)$-good $x$. Then, from the definition of $f_{x,i}$ and \newref[Lemma]{mainlmns}, 
\begin{equation}
  \label{eq:3}
 \pr_{x \in_u \dpm^n}[\,\text{$x$ is $(\epsilon,f)$-good}\,] \geq 1-\epsilon.  
\end{equation}

Let $y \in_\delta x$ be a $\delta$-perturbation of $x \in_u \dpm^n$. Then, since $|S^{x,t_x}| \leq L\, t$, 
\begin{equation}
  \label{eq:4}
 \pr_{x,y}[\,x_{|S^{x,t_x}} \neq y_{|S^{x,t_x}}\,] \leq L\, t\, \delta = \epsilon^{1/(2d+2)}.  
\end{equation}

Also note that for any $i \geq 1$, conditioned on an assignment for the values in $x_{|S^{x,i}}$ and $x_{|S^{x,i}} = y_{|S^{x,i}}$, $\pr_{x,y}[f(x) \neq f(y)] = \NS_{\delta}(f_{x,i}) \leq \NS_\epsilon(f_{x,i})$. Thus, conditioned on $x$ being $(\epsilon,f)$-good and $x_{|S^{x,t_x}} = y_{|S^{x,t_x}}$, 
%\[ \pr_{x,y}\,[\,f(x) \neq f(y)\,|\,x_{|S^{x,t_x}} = y_{|S^{x,t_x}}\text{ and $x$ is $(\epsilon,f)$-good} \,] \leq  \NS_\epsilon(f_{x,t_x}) \leq c_2 \,\epsilon^{1/(2d+2)}.\]
\begin{equation}
  \label{eq:5}
 \pr_{x,y}\,[\,f(x) \neq f(y) \,] \leq  \NS_\epsilon(f_{x,t_x}) \leq \Delta_d \,\epsilon^{1/(2d+2)}.  
\end{equation}
Combining \eqref{eq:3}, \eqref{eq:4}, \eqref{eq:5}, we get
\[ \NS_{\delta}(f) \leq \epsilon + L\,t\,\delta+  \Delta_d\epsilon^{1/(2d+2)} = O_d\left(\,\epsilon^{1/(2d+2)}\,\right).\]
Since $\delta = O_d\left(\eps^{\frac{4d+5}{2d+2}}/\log^2(1/\epsilon)\right)$ and the above is applicable for all $\epsilon > 0$, we get that for all $\rho >0$,
\[ \NS_{\rho}(f) = O_d\left(\,\log(1/\rho)\rho^{1/(4d+5)}\,\right) = O_d\left(\,\rho^{1/(4d+6)}\,\right).\]
\end{proof}
\vspace{-.4in}
\section{Average sensitivity of PTFs}\label{sec:as}

In this section we bound the average sensitivity of PTFs on the Boolean hypercube, proving \newref[Theorem]{thm:as}. We first prove a lemma bounding the average sensitivity of a Boolean function in terms of its noise sensitivity. \tref{as} follows immediately from \tref{bns} and the following lemma:

\begin{lemma}[noise sensitivity to average sensitivity]\label{lm:nstoas}
  For any Boolean function $f:\dpmn \rgta \dpm$, $\AS(f) \leq 2 n e \,\NS_{(1/n)}(f)$.
\end{lemma}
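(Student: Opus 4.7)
The plan is to expand the noise sensitivity into a sum over possible flip patterns and isolate the contribution of single-bit flips, which exactly recovers the individual influences.

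Let $S \subseteq [n]$ be the (random) set of coordinates on which the $\delta$-perturbation $Z$ differs from $X$. With $\delta = 1/n$, the set $S$ has each coordinate independently with probability $1/n$, so $\Pr[S = \{i\}] = (1/n)(1 - 1/n)^{n-1}$ for each $i$. Conditioning on the pattern of flips and noting that the event $\{f(X) \neq f(Z)\}$ when $S = \{i\}$ is precisely $\{f(X) \neq f(X^{(i)})\}$, we get
\[
\NS_{1/n}(f) \;=\; \sum_{T \subseteq [n]} \Pr[S = T]\,\Pr\bigl[f(X) \neq f(X \oplus \mathbf{1}_T)\bigr] \;\geq\; \sum_{i=1}^n \Pr[S = \{i\}] \cdot \Inf_i(f),
\]
keeping only the singleton terms. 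Plugging in the value of $\Pr[S = \{i\}]$ and summing gives
\[
\NS_{1/n}(f) \;\geq\; \frac{(1 - 1/n)^{n-1}}{n} \cdot \AS(f).
\]

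The final step is the elementary inequality $(1 - 1/n)^{n-1} \geq 1/e$ valid for every $n \geq 1$ (this sequence is decreasing and converges to $1/e$ from above, so any finite $n$ satisfies it; at $n=1$ the exponent is $0$ and the inequality is trivial). Rearranging yields $\AS(f) \leq n e \cdot \NS_{1/n}(f) \leq 2 n e \cdot \NS_{1/n}(f)$, which is the claim (with the factor $2$ being slack).

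There is no genuine obstacle here: the only subtle point is the lower bound $(1-1/n)^{n-1} \geq 1/e$, which can be verified either by monotonicity of this sequence or by noting that $\ln(1-1/n) \geq -1/n - 1/(n(n-1))$ for $n \geq 2$ implies $(n-1)\ln(1-1/n) \geq -1$. Everything else is a direct identity using the definitions of $\NS$, $\AS$, and $\Inf_i$.
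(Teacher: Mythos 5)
Your proof is correct and takes essentially the same approach as the paper: decompose the noise event by the flip pattern $S$, keep only the singleton contributions to recover $\sum_i \Inf_i(f)$, and bound $\Pr[S=\{i\}]=(1/n)(1-1/n)^{n-1}$ from below. The only difference is cosmetic: you use the sharper bound $(1-1/n)^{n-1}\geq 1/e$ (making the factor $2$ slack), whereas the paper directly asserts $(1/n)(1-1/n)^{n-1}>1/(2ne)$.
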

\begin{proof}
Let $\delta = 1/n$. Let $X \in_u \dpmn$ and let $S \subseteq [n]$ be a random set with each element $i \in [n]$ present in $S$ independently with probability $\delta$. Let $X(S)$ be the vector obtained by flipping the coordinates of $X$ in $S$. Then, $\NS(f) = \Pr_{X,S}[f(X) \neq f(X(S))]$. Observe that for $i \in [n]$, $\Pr[ S = {i}] = \delta(1-\delta)^{n-1} = (1/n)\,(1-1/n)^{n-1} > 1/2ne$.
%\[\Pr[ S = {i}] = \delta(1-\delta)^{n-1} = \frac{1}{n}\,\left(1-\frac{1}{n}\right)^{n-1} > \frac{1}{2ne}.\]
Therefore,
\begin{align*}
\NS_\delta(f) &= \Pr_{X,S}[f(X) \neq f(X(S))] \\
&= \sum_i \Pr_S[\,S = \{i\}\,]  \cdot \Pr_X [\,f(X) \neq f(X(S))\,|\, S = {i}\,] + \Pr_S[\,|S| \neq 1\,] \cdot \Pr_{X,S}[\, f(X) \neq f(X(S))\,|\,|S| \neq 1\,]\\
&> \sum_i \frac{1}{2ne} \, \Pr_X[ f(X) \neq f(X(\{i\})] = \frac{1}{2ne} \AS(f).  
\end{align*}
\end{proof}

We now give a bound of $O(n^{1-2^{-d}})$ on the average sensitivity using a different (not using the noise sensitivity bounds), combinatorial, argument.
\begin{theorem}\label{combas} For any degree $d$ PTF $f: \{1,-1\}^n \to \{1,-1\}$, $\AS(f) \leq 3\, n^{1-2^{-d}}$.
\end{theorem}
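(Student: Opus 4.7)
The plan is to induct on $d$, applying Lemma~\ref{mainlmc} to Boolean functions built from the discrete derivatives of $P$. The base case $d=0$ is trivial since $f$ is constant. For the inductive step, write $f=\sign(P)$ with $P$ multilinear of degree $d$ (WLOG on the hypercube). For each $i$, the discrete derivative $D_i P(X_{-i}) := (P(1,X_{-i}) - P(-1,X_{-i}))/2$ is a polynomial of degree at most $d-1$ not depending on $X_i$. Define $g_i(X_{-i}) = \sign(D_i P(X_{-i}))$, using the convention $\sign(0) = +1$ to make $g_i$ Boolean. Each $g_i$ is a degree-$(d-1)$ PTF not depending on $X_i$, so the inductive hypothesis gives $\AS(g_i) \leq 3 n^{1 - 2^{-(d-1)}}$. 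Set $Z(X) := \sum_i X_i\, g_i(X_{-i})$; by Lemma~\ref{mainlmc},
$$(\ex|Z|)^2 \leq 2 \sum_i \AS(g_i) + n \leq 6\, n^{2 - 2^{-(d-1)}} + n.$$

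The crux of the proof is the identity $\ex[f(X)\, Z(X)] = \AS(f)$, which immediately yields the bound $\AS(f) \leq \ex|Z|$ since $fZ \leq |Z|$ pointwise. To establish the identity I would condition on $X_{-i}$ and observe that $\ex_{X_i}[X_i\, f(X)] = D_i f(X_{-i})$, where $D_i f$ is the discrete derivative of $f$ (taking values in $\{-1,0,1\}$). Thus $\ex[X_i g_i(X_{-i}) f(X)] = \ex_{X_{-i}}[g_i \cdot D_i f]$. The key sign-alignment observation is that $g_i \cdot D_i f = \ind{i\text{ sensitive at }X_{-i}}$ pointwise: if $i$ is sensitive then $D_i f = \pm 1$ and its sign equals $\sign(D_i P) = g_i$ (both are determined by which side of zero $P$ lies on after flipping $X_i$), so $g_i D_i f = 1$; if $i$ is non-sensitive then $D_i f = 0$, so $g_i D_i f = 0$. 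Summing over $i$ gives $\ex[fZ] = \sum_i \Pr[i\text{ sensitive}] = \AS(f)$.

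Combining everything, $\AS(f)^2 \leq (\ex|Z|)^2 \leq 6 n^{2 - 2^{-(d-1)}} + n \leq 9\, n^{2 - 2^{-(d-1)}}$ for $n \geq 1$, and taking square roots gives $\AS(f) \leq 3\, n^{1 - 2^{-d}}$, completing the induction. The only non-routine ingredient is the identity $\ex[fZ] = \AS(f)$; everything else is a direct application of Lemma~\ref{mainlmc} and elementary bookkeeping with the exponents and the constant $3$. I expect the main conceptual step to be pinning down the sign-alignment identity $g_i \cdot D_i f = \ind{i\text{ sensitive}}$, which is what lets the $\ex|Z|$ bound translate into a bound on $\AS(f)$ itself (not merely on $\AS(f) - n/2$, which would be useless); handling the measure-zero case $D_i P = 0$ is a minor technicality absorbed by the convention $\sign(0) = +1$.
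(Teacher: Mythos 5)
Your proposal is correct and follows essentially the same route as the paper: you express $\AS(f) = \ex_X\bigl[f(X)\sum_i X_i\,\sign(D_iP)(X_{-i})\bigr]$, bound the right side by $\ex\bigl|\sum_i X_i\,\sign(D_iP)(X_{-i})\bigr|$, and close the induction with Lemma~\ref{mainlmc}, with $\sign(D_iP)$ being exactly the paper's $f_i$. The paper justifies the key equality by observing that $g_i := f\cdot f_i$ is monotone increasing in $x_i$, so $\infl_i(f)=\infl_i(g_i)=\ex[X_ig_i(X)]$; you instead establish the equivalent pointwise sign-alignment $\sign(D_iP)\cdot D_if=\ind{i\text{ sensitive}}$ — two phrasings of the same fact, and both handle the $\sign(0)$ convention correctly. (Minor arithmetic note: $2\sum_i\AS(g_i)+n\le 6n^{2-2^{-(d-1)}}+n\le 7n^{2-2^{-(d-1)}}$, not $9n^{2-2^{-(d-1)}}$, but $\sqrt{7}<3$ so your conclusion is unaffected.)
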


We first show the theorem using \lref{mainlmc}. 
\begin{proof}
Let $P(x) = x_i P_i(x_{-i}) + Q_i(x_{-i})$, where $P_i(\;), Q_i(\;)$ are degree $d-1$ and degree $d$ polynomials respectively that do not depend on $x_i$. Define $f_i(x_{-i}) = \sgn(P_i(x_{-i}))$ and $g_i(x) = f(x) f_i(x_{-i})$. Then,
\begin{align*}\label{}
 \infl_i(f) &= \pr_{X \in_u \dpm^n}[f(X) \neq f(X^{(i)})]  = \pr_{X \in_u \dpm^n}[f(X) f_i(X_{-i}) \neq f(X^{(i)}) f_i(X_{-i})]\\
 &= \pr_{X \in_u \dpm^n}[f(X) f_i(X_{-i}) \neq f(X^{(i)}) f_i((X^{(i)})_{-i})] = \pr_{X \in_u \dpm^n}[g_i(X) \neq g_i(X^{(i)})]\\
 &= \infl_i(g_i).
\end{align*}

Observe that $g_i$ is monotone increasing in $x_i$ for $i \in [n]$ and hence $\infl_i(g_i) = \ex_X[X_ig_i(X)]$. Thus, 
\begin{multline*}
 \AS(f) = \sum_i \infl_i(f) = \sum_i \infl_i(g_i)
  = \sum_i \ex_X[X_i g_i(X)]
  = \sum_i \ex_X[X_i f(X) f_i(X_{-i})]
  = \ex_X \left[ f(X) \sum_i X_i f_i(X_{-i})\right].
\end{multline*}
Since $|f(x)| \leq 1$ for all $x$, we have
\begin{equation}\label{eq1:c}
  \AS(f) \leq \ex_X\left[\, \left| \sum_i X_i f_i(X_{-i})\right|\,\right].
\end{equation}

We now use induction and \lref{mainlmc}. For an LTF $f$, $f_i$ as defined above are constants. Therefore, by Equation \eqref{eq1:c},
\[ \AS(f) \leq  \ex_X\left[\, \left| \sum_i X_i f_i(X_{-i})\right|\,\right] = \ex_X\left[\left|\sum_i X_i\right|\right] = O(\sqrt{n}).\]
Suppose the theorem is true for degree $d$ PTFs and let $f$ be a degree $d+1$ PTF and let $f_i$ be as defined before. Then, by Equation \eqref{eq1:c} and \newref[Lemma]{mainlmc} 
\[ \AS(f)^2 \leq 2 \sum_i \AS(f_i) + n \leq \sum_i 6\,n^{1-2^{-d}} +n \leq 7\, n^{2 - 2^{-d}} .\]
Therefore, $\AS(f) \leq 3 \,n^{1-2^{-(d+1)}}$. The theorem follows by induction.
\end{proof}

\begin{proof}[Proof of \lref{mainlmc}]
For brevity, let $f_i(x) = f_i(x_{-i})$. By Cauchy-Schwarz, for any random variable $Z$ we have $\ex[|Z|]^2 \leq \ex[Z^2]$.  Thus,
\begin{align}\label{eq4}
 \ex_X\left[\, \left| \sum_i X_i f_i(X_{-i})\right|\,\right]^2 &\leq \ex_X\left[\, \left( \sum_i X_i f_i(X_{-i})\right)^2\,\right]\nonumber\\
  &= \ex_X [\,\sum_{i,j} X_i X_j f_i(X) f_j(X) \,]\nonumber\\
%  &= \ex_X[\, \sum_i 1\,] + \ex_X[\, \sum_{i\neq j} X_i X_j f_i(X) f_j(X)\,]\\
  &= n + \sum_{i \neq j}\,\ex_X[\,X_i X_j f_i(X) f_j(X)\,].
\end{align}
For $i \neq j \in [n]$, let $x_{-ij} = (x_k: k \in [n], k \neq i,j)$ and let $S_i^j = \{x \in \dpm^n : f_i(x) \neq f_i(x\oplus e_j)\}$. Note that $\infl_j(f_i) = \pr_X[X \in S_i^j]$. Now,
\begin{equation}\label{eq2:c}
  \ex_X[\,X_i X_j f_i(X) f_j(X)\,] = \sum_{x \in S_i^j \cup S_j^i}\, \mu(x)\, x_i x_j f_i(x) f_j(x) \; + \; \sum_{x \notin S_i^j \cup S_j^i}\mu(x)\, x_i x_j f_i(x) f_j(x),  
\end{equation}
where $\mu(x) = 1/2^n$ is the probability of choosing $x$ under the uniform distribution. We bound the first term in the above expression by the average sensitivity of the $f_i$'s and show that the second term vanishes. Observe that,
\begin{equation}\label{eq5}
  \sum_{x \in S_i^j \cup S_j^i} \mu(x) x_i x_j f_i(x) f_j(x) \leq \mu(S_i^j \cup S_j^i) \leq \mu(S_i^j) + \mu(S_j^i) = \infl_j(f_i) + \infl_i(f_j).  
\end{equation}

Note that for $x \notin S_i^j \cup S_j^i$, $f_i(x), f_j(x)$ are both independent of the values of $x_i,x_j$. For such $x$ (abusing notation) let $f_i(x_{-ij}) = f_i(x)$, $f_j(x_{-ij}) = f_j(x)$ and let $T_{ij} = \{(x_k:k \neq i,j): x \notin  S_i^j \cup S_j^i\}$. Then, since for $x \notin S_i^j \cup S_j^i$, $f_i(x),f_j(x)$ depend only on $x_{-ij}$, we get that $x \notin S_i^j \cup S_j^i$ if and only if $x_{-ij} \notin T_{ij}$. Therefore,
\begin{align}\label{eq6}
  \sum_{x \notin S_i^j \cup S_j^i}\mu(x)\, x_i x_j f_i(x) f_j(x) &= \sum_{x \notin S_i^j \cup S_j^i} \mu(x_{-ij})\,\mu(x_i)\, \mu(x_j) \, f_i(x_{-ij})\, f_j(x_{-ij}) \, x_i x_j\nonumber\\
&= \sum_{x_{-ij} \notin T_{ij}}\, \mu(x_{-ij}) f_i(x_{-ij})\, f_j(x_{-ij}) \,\ex_{x_i,x_j}[x_i x_j] = 0.
\end{align}

From Equations \eqref{eq4}, \eqref{eq2:c}, \eqref{eq5},\eqref{eq6} we have,
\begin{align*}\label{}
 \ex_X\left[\, \left| \sum_i X_i f_i(X_{-i})\right|\,\right]^2 \leq n + \sum_{i \neq j} (\infl_j(f_i) + \infl_i(f_j)) = n+2 \sum_i \sum_{j: j \neq i} \infl_j(f_i) =n+ 2 \sum_i \AS(f_i)  .
\end{align*}
\end{proof}

\begin{remark}
The bound of \newref[Lemma]{mainlmc} is tight up to a constant factor if we only have bounds on the average sensitivity of the $f_i$'s to go with. For example, consider $f_i$ defined as follows. Divide $[n]$ into $m = \sqrt{n}$ blocks $B_1,\ldots,B_m$ of size $m$ each and for $1 \leq j \leq m$, $i \in B_j$, let $f_i = \prod_{k \in B_j: k \neq i} x_k$. Then, the left hand side of the lemma is $\Theta(n^{3/2})$ and $\AS(f_i) = m - 1 = \Theta(\sqrt{n})$ for all $i$.
\end{remark}

\section*{Acknowledgments}
We thank Ilias Diakonikolas for pointing out an error in an earlier version of this writeup.

{\footnotesize
\bibliographystyle{prahladhurl}
\bibliography{ptf-bib}
%\bibliography{../../../jrnl-names-abb,../../../prahladhbib,../../../crossref}

}

\appendix
\section*{Appendix}
\section{Bounding the perturbation polynomial in the Gaussian setting} \label{sec:claimproof}
\subsection{Background on Hermite polynomials}

The univariate Hermite polynomials are defined as follows
$$H_k(x) = \frac{(-1)^k}{\sqrt{k!}}e^{x^2/2}\frac{d^k}{dx^k}e^{-x^2/2}.$$
The univariate Hermite polynomials satisfy $H'_k(x) = \sqrt{k} H_{k-1}(x)$.  

The multivariate Hermite polynomials over $n$ variables $(x_1,\dots,x_n)$ are defined as follows. Let $S\subseteq [d]$ be a multiset. It will be convenient to denote a multiset $S$ by a sequence of $n$ indices as $S=(s_1,\dots,s_n)$ where each $s_i$ denotes the cardinality of element $i\in [n]$ in the set $S$. Note, by this notation, $|S| = \sum s_i$.

$$H_S(x_1,\dots,x_n) = \prod_{i=1}^n H_{s_i}(x_i).$$

The partial derivatives of the multivariate Hermite polynomials can now be calculated as follows
$$(\partial H_S)_{i} (x_1,\dots,x_n) = \sqrt{s_i}H_{s_i-1}(x_i)\prod_{j\neq i} H_{s_j}(x_j) = \sqrt{s_i}H_{S\setminus\{i\}}(x_1,\dots,x_n).$$

Furthermore, the iterative partial derivatives can be calculated as follows. Let $R =(r_1,\dots,r_n)\subseteq S$ be any multiset.
$$(\partial H_S)_{R} = \sqrt{\prod_{i=1}^n \frac{s_i!}{(s_i-r_i)!}}\cdot H_{S\setminus R}.$$
This in particular gives the follow Taylor series expansion for $H_S(z)=H_S(z_1,\dots,z_n)$ about the point $x=(x_1,\dots,x_n)$ for multisets $S$. Let $|S|=d$. Since $H_S$ depends on at most $d$ variables, we can assume without loss of generality that $H_S$ is defined on the first $d$ variables, i.e., $S \subseteq [d]$ and $H_S(x)=H_S(x_1,\dots,x_d)$.
\begin{eqnarray}
H_S(z) &=& H_S(x) + \sum_{k=1}^d \sum_{R: |R| = k} \frac{1}{\prod_{i=1}^d r_i!}(\partial H_S)_R(x)\cdot \left( \prod_{i=1}^d (z_i-x_i)^{r_i}\right)\nonumber\\
&=&H_S(x) + \sum_{k=1}^d \sum_{R: |R| = k} \frac{1}{\prod_{i=1}^d r_i!}\sqrt{\prod_{i=1}^d \frac{s_i!}{(s_i-r_i)!}}\cdot H_{S\setminus R}(x)\cdot \left( \prod_{i=1}^d (z_i-x_i)^{r_i}\right)\label{eq:taylor}
\end{eqnarray}

The multivariate Hermite polynomials up to degree $d$ form a basis for the set of all multivariate polynomials of degree $d$. In particular, given any degree $d$ polynomial $P(x_1,\dots,x_n)$, we can write it as a linear combination of Hermite polynomials as follows
\begin{eqnarray*}
P(x_1,\dots,x_n) & = & \sum_{S\subset [n]: |S| \leq d} \hat{P}_S \,H_S(x_1,\dots,x_n)
\end{eqnarray*}
The values $\hat{P}_S$ are called the Hermite coefficients of $P$.

The Hermite polynomials are especially useful while working over the (multivariate) normal distribution due to the following orthonormality conditions.
\begin{eqnarray*}
\ex_{X \gets \NN} \left[ H_S(X) H_T(X) \right] &= &
\begin{cases}
1 & \text{if } S = T\\
  0 & \text{otherwise.}
\end{cases}
\end{eqnarray*}                   
This implies that $\|P\|^2 = \ex_{X\gets \NN}[P^2(X)] = \sum \hat{P}^2_S$.

\subsection{Proof of {\newref[Claim]{claim:smallQ}}}

Recall that we must prove there exists a constant $c_d$ such that $\|Q\| \leq c_d\sqrt{\delta}$.

\begin{proof}
Given any degree $d$ multivariate polynomial $P$, we can write it in the Hermite basis as $P(x) = \sum_{S: |S| \leq d} \hat{P}_S H_S(x)$ and use this expansion to bound $\|Q\| = \|P(Z)-P(X)\|$ as follows.
\begin{align}
\|Q\|^2&=\ex\left[(P(Z)-P(X))^2\right] = \ex\left[ \left(\sum_{S: |S| \leq d} \hat{P}_S (H_S(Z)-H_S(X))\right)^2\right]\nonumber\\
& =  \sum_{S,T}\hat{P}_S\hat{P}_T \ex\left[(H_S(Z)-H_S(X))\cdot (H_T(Z)-H_T(X)) \right]\nonumber\\
&=\sum_{S}\hat{P}^2_S \ex\left[(H_S(Z)-H_S(X))^2\right] + \sum_{S\neq T} \hat{P}_S\hat{P}_T \ex\left[(H_S(Z)-H_S(X))\cdot (H_T(Z)-H_T(X)) \right]\nonumber\\
& = \sum_{S}\hat{P}^2_S \ex\left[(H_S(Z)-H_S(X))^2\right] -\sum_{S\neq T} \hat{P}_S \hat{P}_T \left(\ex\left[H_S(Z)H_T(X)\right] + \ex\left[H_S(X)H_T(Z)\right]\right)\label{eq:qbound}
\end{align}
where the last step follows from the orthonormality of the Hermite polynomials. 

We will now show that $\ex[H_S(X)H_T(Z)]=0$ for $S \neq T$. Since $S\neq T$, it suffices to show the following univariate case: $\ex[H_s(X)H_t(Z)] =0$ for $s\neq t$. We now observe that the joint distribution $(X,Z)$ is identical to the distribution $(Z,X)$. Hence, to calculate $\ex[H_s(X)H_t(Z)]$ for $s \neq t$ we can assume without loss of generality that $s > t$. Now, $H_t(Z)=H_t((1-\delta)X+\rho Y)$ is a bivariate degree $t$ polynomial and can be expanded in the Hermite basis as $\sum_{i,j =0}^t \alpha_{ij}H_i(X)H_j(Y)$. We thus have $\ex_{X,Y}[H_s(X)H_t(Z)] = \sum_{i,j=0}^t \alpha_{ij}\ex_X[H_s(X)H_i(X)]\cdot \ex_Y[H_j(Y)]=0$ since $s > t \geq i$.

Plugging this into the expression for $\|Q\|$ in \eqref{eq:qbound}, we have $\|Q\|^2=\sum_{S}\hat{P}^2_S \ex\left[(H_S(Z)-H_S(X))^2\right] $. Since $\|P\|^2= \sum_S \hat{P}^2_S=1$, to prove the claim it suffices if  we show that there exists a constant $c_d$ such that for any multiset $S$, $\|H_S(Z)-H_S(X)\|^2 \leq c^2_d\delta$. We bound the norm $\|H_S(Z)-H_S(X)\|$ using the Taylor series expansion of $H_S(Z)$ as stated in equation \eqref{eq:taylor}. Let $|S|=d$; then we have

\begin{align*}
\ex_{X,Y}\left[ |H_S(Z)-H_S(X)|\right] 
& \leq \sum_{k=1}^d \sum_{R:|R|=k} \frac{1}{\prod_{i=1}^d r_i!}\sqrt{\prod_{i=1}^d \frac{s_i!}{(s_i-r_i)!}}\cdot\ex\left[\left| H_{S\setminus R}(X)\cdot \left( \prod_{i=1}^d (Z_i-X_i)^{r_i}\right)\right|\right]\\
& \leq \sum_{k=1}^d \sum_{R:|R|=k} \frac{1}{\prod_{i=1}^d r_i!} d^{k/2} \ex \left[\left|H_{S\setminus R}(X)\cdot \left( \prod_{i=1}^d (Z_i-X_i)^{r_i}\right)\right|\right]\\
& \qquad \qquad \text{[ since each $s_i \leq d$ and $\sum r_i =|R|=k$ ]}\\
& \leq \sum_{k=1}^d d^{k/2}\sum_{R:|R|=k} \frac{1}{\prod_{i=1}^d r_i!} \sqrt{\ex\left[H^2_{S\setminus R}(X)\right]\cdot \ex\left[\prod_{i=1}^d (Z_i-X_i)^{2r_i}\right]}\\
& \qquad \qquad \text{[ By Cauchy-Schwarz inequality ]}\\
&=\sum_{k=1}^d d^{k/2}\sum_{R:|R|=k} \frac{1}{\prod_{i=1}^d r_i!} \sqrt{\prod_{i=1}^d\ex\left[(Z_i-X_i)^{2r_i}\right]}\\
& \qquad \qquad \text{[ By orthonormality of $H_{S\setminus R}$ and independence of $(Z_i-X_i)$ over the $i$'s ]}\\
&= \sum_{k=1}^d d^{k/2}\sum_{R:|R|=k} \frac{1}{\prod_{i=1}^d r_i!} \sqrt{\prod_{i=1}^d\delta^{r_i} \frac{(2r_i)!}{r_i!}}\\
& \qquad \qquad \text{ [ Since $Z_i-X_i \sim \calN(0,\sqrt{2\delta})$ whose $2r$-th moment is $\delta^r \frac{(2r)!}{r!}$ ]}\\
& = \sum_{k=1}^d (d\delta)^{k/2}\sum_{R:|R|=k} \sqrt{\prod_{i=1}^d \binom{2r_i}{r_i}} \leq \sum_{k=1}^d (d\delta)^{k/2} \cdot d^k \cdot 2^{k/2} = \sum_{k=1}^d (d^{3/2}\sqrt{2\delta})^k\\
& \leq 2d^{3/2}\sqrt{2\delta} \qquad \qquad \text{[ If $d^{3/2}\sqrt{2\delta} \leq 1/2$ ]}
\end{align*}
Thus, if $d^{3/2}\sqrt{2\delta} \leq 1/2$, then $\ex[|H_S(Z)-H_S(X)|] \leq 2d^{3/2}\sqrt{2\delta}$.
We can now use $(1,2)$-hypercontractivity for degree $d$ polynomials under the normal distribution (see \cite[Remark~5.13]{Janson1997}), and bound $\|H_S(Z)-H_S(X)\|$ as follows.$$\|H_S(Z)-H_S(X)\|^2\leq e^d \ex\left[ |H_S(Z)-H_S(X)|\right] \leq 2d^{3/2}e^d\sqrt{2\delta}.$$

If $d^{3/2}\sqrt{2\delta} > 1/2$, we have $\ex[|H_S(Z)-H_S(X)|^2] \leq 2\ex[H_S^2(Z) + H_S^2(X)] \leq 4 < 8d^{3/2}\sqrt{2\delta}$. Thus, either way, we have that there exists a constant $c_d$ such that $\|H_S(Z)-H_S(X)\| \leq c_d\sqrt{\delta}$.  
\end{proof}

%%% Local Variables: 
%%% mode: latex
%%% TeX-master: "ptf"
%%% End: 

\end{document}